\newtheorem{theorem}{Theorem}[section]
\newtheorem{definition}[theorem]{Definition} 
\newtheorem{claim}[theorem]{Claim} 
\newtheorem{lemma}[theorem]{Lemma}
\newtheorem{corollary}[theorem]{Corollary} 
\newtheorem{fact}[theorem]{Fact}
\newcommand{\Eq}[1]{Eq.~(\ref{#1})}
\newcommand{\Fig}[1]{Fig.~\ref{#1}}
\newcommand{\Def}[1]{Def.~\ref{#1}}
\newcommand{\Lem}[1]{Lemma~\ref{#1}}
\newcommand{\Sec}[1]{Sec.~\ref{#1}}
\newcommand{\cRef}[1]{Ref.~\cite{#1}}
\newcommand{\cRefs}[1]{Refs.~\cite{#1}}
\newcommand{\cc}[1]{~\mbox{\cite{#1}}}
\newcommand{\Thm}[1]{Theorem~\ref{#1}}
\newcommand{\App}[1]{Appendix~\ref{#1}}
\newcommand{\qedsymb}{\hfill{\rule{2mm}{2mm}}}  
\newenvironment{proof}[1][]{\begin{trivlist}  
\item[\hspace{\labelsep}{\bf\noindent Proof#1:\/}] 
 }{\qedsymb\end{trivlist}}
\newcommand{\norm}[1]{{\| #1 \|}}  
\newcommand{\ket}[1]{{ |{#1} \rangle }}  
\newcommand{\av}[1]{{ \langle {#1} \rangle }}
\newcommand{\ketbra}[2]{{ |{#1} \rangle\langle {#2} | }}
\newcommand{\orderof}[1]{\mathcal{O}(#1)} 
\newcommand{\EqDef}{\stackrel{\mathrm{def}}{=}}
\newcommand{\Id}{\mathbbm{1}}
\DeclareMathOperator*{\Span}{Span}
\DeclareMathOperator*{\Spec}{Spec}
\DeclareMathOperator{\poly}{poly}
\newcommand{\Tr}{\mathrm{Tr}}
\DeclareMathOperator{\supp}{supp}
\newcommand{\mcF}{\mathcal{F}}
\newcommand{\mcE}{\mathcal{E}}
\newcommand{\mcI}{\mathcal{I}}
\newcommand{\mcH}{\mathcal{H}}
\newcommand{\mcS}{\mathcal{S}}
\newcommand{\mcK}{\mathcal{K}}
\newcommand{\mcT}{\mathcal{T}}
\newcommand{\mcD}{\mathcal{D}}
\newcommand{\eps}{\epsilon}
\newcommand{\ualpha}{{\underline{\smash{\alpha}}}}
\newcommand{\ubeta}{{\underline{\smash{\beta}}}}
\newcommand{\uzero}{{\underline{0}}}
\newcommand{\CZ}{\text{C-Z}}
\newcommand{\ue}{{\underline{e}}}
\newcommand{\ueps}{{\underline{\eps}}}
\newcommand{\tQ}{\tilde Q}
\newcommand{\BBC}{\mathbb{C}}
\newcommand{\qnorm}[1]{\| #1 \|_{Q_1}}
\title{Local quantum channels
giving rise to quasi-local Gibbs states}
\author[1]{Itai Arad}
\author[2]{Raz Firanko}
\author[2]{Omer Gurevich}
\affil[1]{Centre for Quantum Technologies, Singapore}
\affil[2]{Physics Department, Technion, Haifa 3200003, Israel\footnote{Email:  \texttt{razf680@campus.technion.ac.il}}}
\begin{document}

\date{\today}

\maketitle

\begin{abstract}
  We study the steady-state properties of quantum channels with
  local Kraus operators. We consider a large family that consists of
  general ergodic 1-local (non-interacting) terms and general
  2-local (interacting) terms.  Physically, a repeated application
  of these channels can be seen as a simple model for the
  thermalization process of a many-body system. We study its steady
  state perturbatively, by interpolating between the 1-local and
  2-local channels with a perturbation parameter $\eps$. We prove
  that under very general conditions, these states are Gibbs states
  of a quasi-local Hamiltonian. Expanding this Hamiltonian as a
  series in $\eps$, we show that the $k$-th order term corresponds
  to a $(k+1)$-local interaction term in the Hamiltonian, which
  follows the same interaction graph as the channel. We also prove a
  complementary result suggesting the existence of an interaction
  strength threshold, under which the total weight of the high-order
  terms in the Hamiltonian decays exponentially fast. For
  sufficiently small $\eps$, this implies both exponential decay of
  local correlation functions and a classical algorithm with
  quasi-polynomial runtime for computing expectation values of local
  observables in such steady states. Finally, we present numerical
  simulations of various channels that support our theoretical
  results.
\end{abstract}

\section{Introduction}
\label{sec:intro}

The study of open quantum systems governed by local dynamics has
gained significant attention over the past few decades. Such systems
are prevalent in realistic scenarios, as physical systems can never
be completely isolated from their surroundings. As a result of their
interaction with the environment, they are often subject to
non-reversible dynamics, which sends the system to a fixed
point\cc{ref:Bruer2007book}. Such states are often referred to as
\emph{steady states} or \emph{non-equilibrium steady states} (NESS).
Like ground states in closed Hamiltonian systems, steady states of
many-body open quantum systems exhibit a rich variety of phenomena
such as quantum criticality\cc{ref:SO2008PT,PRXQuantum.4.030317,
PhysRevLett.132.176601}, topological
phases\cc{ref:Zhang2023fractons, ref:Moshe2019nogo,
ref:Israelis2020identifying, ref:Dimitry2007Hall}, and
chaos\cc{Kawabatta2023chaos, ref:Kawabata2023chaos2}. They can also
be used as a resource for quantum
computation\cc{ref:Verstrate2009DSE,
ref:Gilyen2017LLL,ref:Ghas2023DSE}, and are crucial for our
understanding and modeling of the noise in quantum
computers\cc{ref:Georgopoulos2021noise,ref:wang2021noise,
ref:ilin2024-learn-ss}.

More recently, many questions and results about open systems, and
steady states in particular, originate from a fruitful exchange of
ideas with the field of quantum information and quantum computation.
A central question in this context, is the preparation of a Gibbs
state on a quantum computer, a task also known as \emph{Gibbs
sampling}. This problem has gathered much attention in recent years
(see, for example, \cRefs{ref:Verstrate2009DSE, ref:Ghas2023DSE,
ref:terhal2000problem, ref:temme2011quantum,
ref:kastoryano2016quantum, ref:brandao2017Gibbs,
ref:motta2020determining, ref:wang2021variational, lambert2023DBDSE,
ref:guo2024designing, Sannia2024dissipationas, ref:chen2023quantum,
ref:Alhambra2023, ref:Consiglio2024variational,
ref:bergamaschi2024advantage, ref:ilin2024dissipative}). In many of
these works, one attempts to approach a desired Gibbs state by
driving the system to a fixed point of a discrete, stochastic map,
which can be implemented on a quantum computer. This naturally leads to the following question:

\begin{center}
  \emph{Under what conditions does the
  steady state of a discrete, local and stochastic quantum map
  approximate a Gibbs state of a local Hamiltonian?}
\end{center}

Given a local Hamiltonian $H$ on a lattice, its Gibbs state at
inverse temperature $\beta=1/k_BT$ is given by
\begin{align}
  \rho_H \EqDef \frac{1}{Z} e^{-\beta H}, \qquad 
    Z \EqDef \Tr(e^{-\beta H}) .
\end{align}
Physically, Gibbs states describe the thermal equilibrium state of a
system described by a Hamiltonian $H$, which is in contact with a
heat bath of inverse temperature $\beta$. They can also be viewed as
quantum generalizations of Boltzmann
machines\cc{ref:chowdhury2020-BoltzmanMachine,
ref:Zoufal2021-BoltzmanMachine}, and are therefore important for
quantum machine learning. Strictly speaking, any quantum state with
a full rank can be written as a Gibbs state $\rho=e^{-H}$ for
$H\EqDef -\log(\rho)$.  We would therefore like to understand under
which conditions does the local structure of the stochastic dynamics
imply a local Gibbs Hamiltonian, and more generally, what is the
relation between the locality properties of this map and the
locality of the Gibbs Hamiltonian. Given the ubiquity of Gibbs
states, which under very broad thermodynamical assumptions are the
natural equilibrium state of any system, it is reasonable to expect
that such local Gibbs Hamiltonians will be the results of many
stochastic local maps.

Our question is not only interesting from a theoretical point of
view, but also from a practical one. Local stochastic quantum maps
can generally be implemented on a quantum computer, and as such, if
they generate an interesting Gibbs state, they effectively
constitute a quantum algorithm for Gibbs state preparation. In this
context, there has been an ongoing effort for constructing such
algorithms by designing a local Lindbladian whose steady state is a
Gibbs state of a given local Hamiltonian\cc{ref:temme2011quantum,
ref:kastoryano2016quantum, ref:chen2023quantum,
ref:bergamaschi2024advantage, ref:chen2023efficient,
ref:ding2024efficient}. For example, in \cRefs{ref:chen2023quantum,
ref:chen2023efficient}, modified versions of the Davis
generators\cc{ref:Davies1974thermalization} introduce such 
Lindbladians, which consist of terms with a small effective support.
It may very well be that a better understanding of the link between
the local stochastic maps and the structure of the Gibbs Hamiltonian
that they generate will lead to more direct ways of constructing
such Gibbs sampling algorithms.

{~}

In this work, we use an elementary proof to identify a large family
of such maps.  We consider $2$-local Kraus maps defined on arbitrary
interaction graph and a perturbation parameter $\eps\in[0,1)$ such
that at $\eps=0$, the system is dissipative and non-interacting. We
show that under fairly weak assumptions, the steady state of the
system is the Gibbs state of a Hamiltonian $H(\eps)=\sum_{k\ge 0}
\eps^k H_k$, where $H_k$ is a sum of terms that are
$(k+1)$-\emph{geometrically} local with respect to the underlying
interaction graph.  We note that while this result is intuitive, it
is not straightforward to prove. A perturbation theory for the
steady state equation will yield an expansion of the steady state in
powers of $\eps$. This expansion will surely contain some
fingerprints of the underlying local dynamics; its first few terms
are expected to be sums of few operators with a simple tensorial
structure. However, the Gibbs Hamiltonian is the \emph{logarithm} of
this operator --- a highly non-local mapping --- and it is not clear
how the local structure of the steady state will descend to the
Hamiltonian. In particular, it is entirely not clear how local
geometry in $H$ emerges. Our proof, manages to circumvent all this
difficulty without delving into an intricate bookkeeping.

If the norm of the $H_k$ terms grows at most exponentially in $k$,
it follows that for sufficiently small $\eps$, the Hamiltonian
$H(\eps)$ is quasi-local, i.e., the weight of its $k$-local terms
decays exponentially. While we cannot prove that this is always the
case, we give numerical and analytical evidences to support that.
Using numerical simulations of 1D systems, we show that in all the
considered cases, $H(\eps)$ is always quasi-local, for \emph{all}
$\eps\in [0,1)$. In addition, we show the following analytical
result. We consider the expectation value of an arbitrary local
observable $A$ in the steady state $\rho(\eps)$, and expand it
perturbatively in $\eps$.  We obtain a ``Heisenberg-like''
expression for the expectation value of $A$ in the stead-state in
the form of a series of operators $A_0, A_1, A_2, \ldots$ such that
$\av{A} \EqDef \Tr(A\rho(\eps)) = \sum_{k=0}^\infty \eps^k
\Tr(A_k\rho_0)$, where $\rho_0$ is the steady state of the $\eps=0$
system (i.e., the non-interacting system), and the operators $A_k$
satisfy the following properties: 1) $A_0=A$ and the support of
$A_k$ is given by the support of all $k$-steps paths starting from
the support of $A$, and 2) the norm of $A_k$ grows at most as
$O(e^{6k})$. Consequently, for sufficiently small $\eps<
\eps_0\EqDef 1/e^6$, the perturbation series for $\av{A}$ converges
exponentially fast, which is consistent with $H(\eps)$ being
quasi-local. This result is independent of the underlying
interaction graph $G$. If we further assume that the system is
defined on a $D$-dimensional lattice, then for $\eps<\eps_0$, it
follows that $\av{A}$ can be approximated up to an additive error of
$\delta\cdot \norm{A}$ using a classical algorithm that runs in time
$e^{O(\log^D(1/\delta))}$. It is important to note that in the limit
$\eps\rightarrow 1$, the channel becomes strongly interacting and
can efficiently prepare ground states of frustration-free
Hamiltonians\cc{ref:Verstrate2009DSE, ref:Gilyen2017LLL}. Combined
with our second result, this suggests a phase transition in the
complexity of the channel somewhere in $[\eps_0, 1)$. We believe
that this transition point might be less universal and depend on the
underlying graph and local channels.

The structure of the rest of the paper is as follows. In
\Sec{sec:background} we provide some basic theoretical background
about quantum maps and Gibbs state, which will allow us to state and
prove our main theorem. In \Sec{sec:statement}, we describe the
family of systems that we consider, and give a formal statement of
our main results, which are then proved in \Sec{sec:proofs}. In
\Sec{sec:numerics} we present our numerical experiments that support
the locality of $H(\eps)$, and in \Sec{sec:discussion} we offer our
conclusions and outlook for future research.

\section{Background and notation}
\label{sec:background}

In this work we consider many-body quantum systems of $n$ qubits
that are located on the vertices of an interaction graph $G=(V,E)$
with $n=|V|$.  We denote the Hilbert space of the underlying system
by $\mcH = (\BBC^2)^{\otimes n}$, and the set of linear operators on
$\mcH$ by $L(\mcH)$. 

Given an operator $O\in L(\mcH)$, we shall denote its \emph{operator
norm} by $\norm{O}_\infty$, which is equal to its largest singular
value. The operator norm is equal to the Schatten norms $\norm{O}_p
\EqDef \big( \Tr|O|^p\big)^{1/p}$ with $p\to \infty$ and $|O|\EqDef
\sqrt{O^\dagger\cdot O}$. We shall also make use of the $p=1$
Schatten norm, $\norm{O}_1 = \Tr|O|$, known also as the \emph{trace
norm}, and the $p=2$ case $\norm{O}_2 = \sqrt{\Tr(O^\dagger \cdot
O)}$, which is also known as the \emph{Frobenius norm} or the
\emph{Hilbert Schmidt norm}. We let $\supp(O)$ denote the
\emph{support} of an operator $O$. This is the smallest subset of
qubits $S$ such that $O$ can be written as $O=\hat{O}_S\otimes
\Id_{rest}$ where $\hat{O}_S$ defined on the qubits in $S$ and
$\Id_{rest}$ is the identity operator on the rest of the qubits. We
say that $H\in L(\mcH)$ is a \emph{$k$-local Hamiltonian} if it can
be written as $H=\sum_\mu h_\mu$, where each $h_\mu$ is an Hermitian
operator with $\supp(h_\mu)\le k$. As Hamiltonians often model
physical interactions, we are in particular interested in
\emph{geometrically $k$-local} Hamiltonians, where the interaction
terms $h_\mu$ have a geometrically local support, i.e.,
$\supp(h_\mu)$ is a connected set in $G$. We say that $H=\sum_\mu
h_\mu$ is \emph{a geometrically quasi-local Hamiltonian} if it can
be written as $H=\sum_{k\ge 0} H_k$, where $H_k$ is geometrically
$k$-local Hamiltonian, and $\norm{H_k}_\infty$ decays exponentially
in $k$.

Given a Hamiltonian $H$, the state $\rho_H\EqDef
\frac{1}{Z}e^{-\beta H}$ with $Z=\Tr(e^{-\beta H})$ is called the
\emph{Gibbs state} of $H$ at inverse temperature $\beta$. This is a
physically important state, as it describes the thermal equilibrium
state of a system governed by a Hamiltonian $H$, in contact with a
heat bath at temperature $T=\frac{1}{k_B\beta}$. In this work, we
shall always consider Gibbs states of the form $e^{-H}$ by absorbing
$\beta$ and $1/Z$ in $H$. Consequently, for any full-rank
quantum state $\rho$, we can define $H_G \EqDef -\log\rho$ to be its
\emph{Gibbs Hamiltonian}. For a given $\rho$ it is interesting to
understand the properties of $H_G$, in particular, if it is local or
quasi-local with respect to some interaction graph $G$.

A quantum channel is a trace preserving and completely positive
(CPTP) linear map over $L(\mcH)$, which we shall usually denote by
$\mcE: L(\mcH)\to L(\mcH)$.  Given a channel $\mcE$, we use the
\emph{Hilbert-Schmidt inner-product} $\av{A,B}_{HS} \EqDef
\Tr(A^{\dagger}B)$ to define its adjoint map $\mcE^*$ by demanding
that $\av{\mcE^*(A),B}_{HS} = \av{A,\mcE(B)}_{HS}$ for all $A,B\in
L(\mcH)$. Note that the trace-preservation property of $\mcE$
translates into the condition $\mcE^*(\Id)=\Id$ for the
dual\cc{ref:Bruer2007book}. 

Of central interest for this work are fixed points of a channel
$\mcE$, which are states that satisfy $\mcE(\rho)=\rho$. It can be
shown that every quantum channel has at least one fixed point that
is a proper quantum state (see, for example, Theorem~4.24 in
\cRef{ref:Watrous2018book}). We shall refer to these fixed points as
\emph{steady states}. 

As in the case of operators, we say that a channel $\mcE$ (and more
generally a super-operator) acts locally on a subset of qubits $S$,
if it can be written as $\mcE = \hat{\mcE}\otimes\mcI_{rest}$, where
$\hat{\mcE}$ is defined on $L(\mcH_S)$ (the space of operators
defined on the set of qubits $S$) and $\mcI_{rest}$ is the identity
map on operators that act on the rest of the qubits. We say that a
channel $\mcE$ is $k$-local if it can be written as $\mcE = \sum_\mu
\mcF_\mu$, where each $\mcF_\mu$ acts locally on at most $k$ qubits.
As a typical example, consider a set of $k$-local quantum gates
$\{U_\mu\}$ and a corresponding probability distribution
$\{p_\mu\}$. Then $\mcE(\rho) \EqDef \sum_\mu p_\mu U_\mu \rho
U_\mu^\dagger$ is a $k$-local channel.

\section{Statement of the main results}
\label{sec:statement}

Before stating the results formally, we describe the general setup.
We consider a many-body system defined on $n$ qubits that sit on the
vertices of a graph $G=(V,E)$ with $|V|=n$. Our dynamics is
described by a repetitive application of a channel $\mcE_\eps$ that
depends smoothly on a parameter $\eps\in [0,1]$ until we converge to
the steady state:
\begin{align*}
  \rho_0 \to \mcE_\eps(\rho_0) \to \mcE^2_\eps(\rho_0) \to \ldots
  \to \rho_\infty(\eps) .
\end{align*}
The channel $\mcE_\eps$ is defined as follows. We first pick a set
of 1-qubit channels $\{\mcD_i\}$, where $\mcD_i$ acts locally on
qubit $i$. We further assume that the $\mcD_i$ channels are
ergodic\cc{ref:Sanz2010primitivity}, i.e., they have a unique,
full-rank fixed-point, and no other periodic points (spectrum with
modulus one). We denote these fixed points by $\rho_0^{(i)}$. Next,
we let $\{\mcF_e\}$ be a set of $2$-qubits channels, where $\mcF_e$
acts locally on the pair of qubits that are associated with the edge
$e$. We shall refer to the $\mcD_i$ channels as ``dissipators'' and
to $\mcF_e$ as ``correlators''. 

Given a parameter $\eps\in [0,1]$, we combine the dissipators and
the correlators to define a 2-qubit channel $\mcE_\eps^{(e)}$ that
acts locally on the qubits of the edge $e=(i,j)$:
\begin{align} 
\label{eq:Ee}
  \mcE_\eps^{(e)} \EqDef \frac{1}{2}(1-\eps)
    \big(\mcD_i + \mcD_j\big) + \eps \mcF_e
\end{align}
For $\eps\to 0$, it becomes a sum of two, uncorrelated 1-local
channels. For $\eps\to 1$, we get $\mcE_\eps^{(e)}\to \mcF_e$, which
is a general channel on the two qubits. Finally, we pick a
probability distribution $\{p_e\}$ over the edges of $G$, and define
a global channel $\mcE_\eps$ by
\begin{align}
\label{def:E}
  \mcE_\eps(\rho) \EqDef \sum_{e\in E} p_e \mcE^{(e)}_\eps(\rho)
    = \sum_{e=(i,j)\in E} p_e \Big(
      \frac{1}{2}(1-\eps)\big[ \mcD_i(\rho) 
      + \mcD_j(\rho)\big] + \eps \mcF_e(\rho)\Big).
\end{align}

For the sake of concreteness, we give a simple example of the above
dynamics, which can be implemented on a quantum computer.  Consider
a set of $n$ qubits arranged on a circle and choose our dynamics to
be translation invariant. We define:
\begin{align}
\label{eq:Ch-CZ}
  \mcD_i(\rho) &\EqDef \Tr_i[\rho]\otimes 
    \big(0.9\ketbra{+}{+} + 0.1\ketbra{-}{-}\big)_i , &
  \mcF_{i,i+1}(\rho) &\EqDef \CZ_{i,i+1}\cdot\rho\cdot \CZ_{i,i+1}, 
\end{align}
where $\CZ_{i,i+1}$ denotes the control-Z gate between qubits
$i,i+1$. Choosing $p_e$ to be the uniform distribution over the $n$
edges,  we get the global channel
\begin{align}
\label{eq:ChSimplified}
  \mcE_\eps(\rho) = \frac{1-\eps}{n}\sum_i \mcD_i(\rho) +
   \frac{\eps}{n}\sum_i \mcF_{i,i+1}(\rho) .
\end{align}
 
The dynamics described by a repeated application of the channel in
\Eq{def:E} can be thought of simplified model of thermalization for
an interacting system locally in contact with a thermal bath. On one
hand, the system undergoes local dissipation via the $\mcD_i$
dissipators, and on the other hand, it interacts with itself via the
$\mcF_e$ correlators. As the interactions are local, it is expected
that the steady state of the system will be described by a Gibbs
state of a local or a quasi-local Hamiltonian. Our first result
shows that, to a large extent, this is the case.

\begin{theorem}
\label{thm:quasi-local} 
  Consider a many-body quantum system of $n$ qubits as described
  above, in which the qubits sit on the vertices of the graph
  $G=(V,E)$, and let $\mcE_\eps$ be a local channel as in
  \eqref{def:E}. Then $\mcE_\eps$ has a unique, full-rank
  steady-state $\rho_\infty(\eps)$ for every $\eps\in [0,1)$, which
  \emph{defines} a \emph{Gibbs Hamiltonian} $H_G(\eps)\EqDef
  -\log\rho_\infty(\eps)$. This Hamiltonian is an analytic function
  of $\eps$, which can be written as a series
  \begin{align}
    H_G(\eps) = H_0 + \eps H_1 + \eps^2H_2 +\ldots
  \end{align}
  Moreover, for every $k$, the operator $H_k$ is a
  \emph{geometrically} $(k+1)$-local Hamiltonian with respect to the
  underlying graph $G$.
\end{theorem}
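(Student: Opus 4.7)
The plan is to decompose the proof into three parts: existence, uniqueness, and full-rank of $\rho_\infty(\eps)$; analyticity in $\eps$; and the geometric $(k+1)$-locality of each $H_k$. First I would rewrite the channel as $\mcE_\eps = (1-\eps)\mcE_0 + \eps\mcF$, where $\mcE_0 = \sum_i q_i \mcD_i$ with $q_i \EqDef \frac{1}{2}\sum_{e \ni i} p_e$ and $\mcF \EqDef \sum_e p_e \mcF_e$. Because the one-qubit channels $\mcD_i$ mutually commute and each is ergodic with full-rank fixed point $\rho_0^{(i)}$, the product state $\rho_0 \EqDef \bigotimes_i \rho_0^{(i)}$ is a fixed point of $\mcE_0$; diagonalizing on the joint eigenbasis shows that $1$ is a simple eigenvalue of $\mcE_0$ with all others in the open unit disk. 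For $\eps \in (0,1)$, the dissipative component $(1-\eps)\mcE_0$ guarantees eigenvalue $1$ remains simple, so Kato's analytic perturbation theory gives $\rho_\infty(\eps)$ as an analytic function of $\eps$, and full-rank-ness makes $H_G(\eps) = -\log \rho_\infty(\eps)$ analytic as well.

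The heart of the proof is the $(k+1)$-locality of each $H_k$. The naive approach --- expanding $\rho_\infty(\eps) = \sum_k \eps^k \rho_k$ and taking $-\log$ --- looks unpromising because the individual $\rho_k$'s contain disconnected contributions with support on up to $2k$ vertices that would need to cancel in the logarithm. Instead, the plan is to work with $H_G$ directly via an ODE: differentiate the fixed-point equation in $\eps$, use the identity $\partial_\eps e^{-H_G} = -\mathcal{M}_{\rho_\infty}(\partial_\eps H_G)$ with modular super-operator $\mathcal{M}_\rho(X) \EqDef \int_0^1 \rho^s X \rho^{1-s}\, ds$, and rearrange on the trace-zero subspace to obtain
\begin{align*}
    \partial_\eps H_G(\eps) = -\mathcal{M}_{\rho_\infty(\eps)}^{-1} \circ (\Id - \mcE_\eps)^{-1} \circ \mathcal{V}(\rho_\infty(\eps)),
\end{align*}
where $\mathcal{V} \EqDef \mcF - \mcE_0$ is a sum of edge-supported $2$-local super-operators. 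Integrating this ODE order-by-order in $\eps$ extracts each $H_k$.

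Two locality lemmas will drive the inductive argument. First, for every $S \subseteq V$ the ``template'' subspace of operators of the form $\tilde A_S \otimes \bigotimes_{k \notin S} \rho_0^{(k)}$ is invariant under $\mcE_0$: on it, $(\Id - \mcE_0)$ reduces to $Q_S\,\Id - \sum_{k \in S} q_k \mcD_k$ acting only on $\tilde A_S$ (with $Q_S \EqDef \sum_{k \in S} q_k$), and is invertible on the trace-zero slice. Second, $\mathcal{M}_{\rho_0} = \bigotimes_k \mathcal{M}_{\rho_0^{(k)}}$ is $1$-local and satisfies $\mathcal{M}_{\rho_0^{(k)}}^{-1}(\rho_0^{(k)}) = \Id_k$, so $\mathcal{M}_{\rho_0}^{-1}$ converts operators in this template into operators actually supported on $S$ (identity on $V \setminus S$). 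At $\eps = 0$, $\mathcal{V}(\rho_0)$ sits as a sum of such edge templates, and the ODE immediately gives $H_1$ as a sum of geometrically $2$-local terms; for higher $k$, the plan is to Taylor-expand the ODE and argue that each further application of $\mathcal{V}$ extends the existing support by at most one edge of $G$, yielding $(k+1)$-vertex connected supports.

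The main obstacle is making this inductive step rigorous for $k \ge 2$. Taylor-expanding $\mathcal{M}_{\rho_\infty(\eps)}^{-1}$ and $(\Id - \mcE_\eps)^{-1}$ around $\eps = 0$ generates many terms, some obtained by applying $\mathcal{V}$ on edges disjoint from the current support and hence carrying support on up to $2k$ disconnected vertices. These disconnected contributions must cancel in the combination that makes up $H_k$, in a linked-cluster fashion reminiscent of the cumulant expansion. The central technical task is to show that this cancellation is enforced by the factorization of $\mathcal{M}_{\rho_0}^{-1}$ over disjoint components together with the consistency of the $\dot H_G$ ODE --- in other words, that it is precisely the passage through the logarithm, encoded in the ODE rather than in a direct expansion of $\rho_\infty$, that produces geometric locality in each $H_k$.
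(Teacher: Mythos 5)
There is a genuine gap, and it sits exactly at the heart of the theorem. Your plan for the $(k+1)$-locality of $H_k$ ends by naming, rather than solving, the hard problem: you write that the disconnected contributions generated by Taylor-expanding $\mathcal{M}_{\rho_\infty(\eps)}^{-1}$ and $(\Id-\mcE_\eps)^{-1}$ ``must cancel \ldots in a linked-cluster fashion'' and that showing this is ``the central technical task.'' That cancellation \emph{is} the theorem; without it you have only established the $k=1$ case. Worse, your ODE route makes the cancellation genuinely delicate: the resolvent $(\Id-\mcE_\eps)^{-1}$ expanded around $\eps=0$ produces arbitrarily long words in $\mcE_0$ and $\mathcal{V}$, and the terms where successive applications of $\mathcal{V}$ act on edges disjoint from the current support carry supports of size up to $2k$ that are not individually zero --- they only vanish in the specific combination defining $H_k$. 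Organizing that cancellation is essentially a quantum cluster expansion, and nothing in your two locality lemmas (the template subspaces and the factorization of $\mathcal{M}_{\rho_0}$) forces it; those lemmas control where operators live, not which linear combinations vanish.

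The paper sidesteps this entirely with a multivariate trick: assign an independent parameter $\eps_e$ to each edge and expand $H_G(\ueps)=\sum_{\ue_k}\ueps^{\ue_k}h^{(\ue_k)}$. For a fixed multiset of edges $\ue_k$, set $\eps_e=0$ for all $e\notin\ue_k$; the channel then splits over the connected components of the induced subgraph, the steady state factorizes as a tensor product, and its logarithm is a \emph{sum} of terms each supported on one component. Since the coefficients $h^{(\ue_k)}$ are the same in both expansions, a disconnected $\ue_k$ cannot appear in any single monomial of the factorized Hamiltonian, forcing $h^{(\ue_k)}=0$ outright --- the linked-cluster cancellation comes for free from $\log(\sigma_1\otimes\sigma_2)=\log\sigma_1\otimes\Id+\Id\otimes\log\sigma_2$, with no bookkeeping. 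If you want to salvage your ODE approach, the cleanest fix is to import exactly this idea: run your argument in the multivariate setting and use the factorization over components to kill the disconnected coefficients before ever confronting the resolvent expansion. Separately, a smaller issue: your claim that the eigenvalue $1$ of $\mcE_\eps$ ``remains simple'' for all $\eps\in(0,1)$ does not follow from simplicity at $\eps=0$ plus continuity (that only gives a neighborhood of $0$); the paper proves it for the whole cube by noting that the Kraus operators of $\mcE_\eps$ contain rescaled copies of those of $\mcE_0$, so the primitivity/span criterion of Fact~\ref{fact:ergodicity} is inherited. You would need that argument (or an equivalent) to justify applying Kato's theory uniformly on $[0,1)$.
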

From the above theorem, the following corollary easily
follows 
\begin{corollary}
\label{cor:quasi-local} 
  If there exists a constant $\mu>0$ and a polynomial $C=\poly(n)$
  such that $\norm{H_k}_\infty\le C\cdot\mu^k$, then for every
  $\eps<\frac{1}{\mu}$, the Gibbs Hamiltonian $H_G(\eps)$ is
  quasi-local.
\end{corollary}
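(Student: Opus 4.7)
The plan is to treat this as a direct re-indexing of the expansion already supplied by Theorem~\ref{thm:quasi-local}, combined with the hypothesized norm bound. By Theorem~\ref{thm:quasi-local}, the Gibbs Hamiltonian admits the series
\begin{align*}
  H_G(\eps) = \sum_{k\ge 0} \eps^k H_k,
\end{align*}
where each $H_k$ is geometrically $(k+1)$-local with respect to $G$. To match the definition of geometric quasi-locality stated in \Sec{sec:background}, I would define $\tilde H_0 \EqDef 0$ and $\tilde H_j \EqDef \eps^{j-1} H_{j-1}$ for $j\ge 1$, so that $\tilde H_j$ is geometrically $j$-local and $H_G(\eps) = \sum_{j\ge 0}\tilde H_j$.

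The next step is to invoke the assumed bound $\norm{H_k}_\infty \le C\mu^k$ to estimate
\begin{align*}
  \norm{\tilde H_j}_\infty \le \eps^{j-1}\cdot C\mu^{j-1}
    = \frac{C}{\mu\eps}\,(\mu\eps)^j .
\end{align*}
Under the assumption $\eps < 1/\mu$, we have $\mu\eps < 1$, so $\norm{\tilde H_j}_\infty$ decays exponentially in $j$ at a rate controlled by $\log(1/\mu\eps)$, with a prefactor $C/(\mu\eps)$ that is polynomial in $n$ but independent of $j$. This is precisely the condition required by the definition of a geometrically quasi-local Hamiltonian, completing the argument.

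There is essentially no obstacle here beyond verifying that the re-indexed decomposition is consistent with the definition: the analyticity from Theorem~\ref{thm:quasi-local} already ensures the series converges for $\eps\in[0,1)$, and the bound $\norm{\tilde H_j}_\infty \le (C/\mu\eps)(\mu\eps)^j$ gives both absolute convergence in operator norm and the exponential decay demanded by the definition. The only minor point worth mentioning is that the polynomial prefactor $C=\poly(n)$ does not interfere with the definition, since quasi-locality only requires exponential decay of $\norm{\tilde H_j}_\infty$ in $j$, not uniformity in $n$.
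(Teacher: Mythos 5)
Your argument is correct and is exactly the re-indexing the paper has in mind: the paper offers no separate proof of Corollary~\ref{cor:quasi-local}, asserting only that it ``easily follows'' from Theorem~\ref{thm:quasi-local}, and your absorption of $\eps^k$ into the $k$-th term together with the bound $\norm{\eps^k H_k}_\infty \le C(\mu\eps)^k$ with $\mu\eps<1$ is the intended one-line verification of the definition of geometric quasi-locality from \Sec{sec:background}. No gaps.
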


While we suspect that the assumption in the above corollary
generally holds, we do not have a proof of it. Nevertheless, we
provide two pieces of evidence in its support. Firstly, in
\Sec{sec:numerics}, we present a series of numerical experiments of
1D systems, which show that for generic systems,
$\eps^k\norm{H_k}_\infty$ decreases exponentially fast in $k$ for
every $\eps\in [0,1)$. Secondly, we show that an exponential
convergence in $k$ for $\eps$ below a certain threshold
$\eps_0$ \emph{does}
exist when considering a related quantity. Specifically, we consider
the expectation value of an observable $A$ in the steady state: 
\begin{theorem}
\label{thm:avA}
  Consider the system in \Thm{thm:quasi-local}, and assume in
  addition that the dissipators $\mcD_i$ are of the form
  $\mcD_i(\rho) = \Tr_i(\rho)\otimes W_i$, where $W_i$ is an
  arbitrary 1-qubit full-rank state. Let $\rho_\infty(\eps) = \rho_0
  + \eps \rho_1 + \eps^2\rho_2 + \ldots$ be the expansion of the
  steady state in $\eps$ and let $A$ be an observable with
  $|\supp(A)|=\ell$. Then there exists a sequence of operators
  $A=A_0, A_1, A_2, \ldots$ such that $\Tr(\rho_k A) = \Tr(\rho_0
  A_k)$ and therefore
  \begin{align}
  \label{eq:avA}
    \av{A}(\eps) \EqDef \Tr\big(\rho_\infty(\eps)A\big) 
      =  \sum_{k\ge 0}\eps^k \Tr(\rho_0 A_k).
  \end{align}
  Moreover, the operators $\{A_k\}$ have the following properties:
  \begin{enumerate}
    \item $\norm{A_k}_\infty
      \le e^{6(k+\ell)}\cdot\norm{A}_\infty$.
        
    \item $\supp(A_k) \subseteq \mathrm{Ball}_k(\supp(A))$, 
      where $\mathrm{Ball}_k(X)$ denotes a ball of radius $k$
      around a subset of vertices $X$ under the graph's
      metric.

    \item $A_k$ can be calculated classically 
      in time $T=e^{O(|\supp(A_k)|)}$.
  \end{enumerate}
\end{theorem}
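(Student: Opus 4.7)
The plan is to perturb the steady-state equation in $\eps$ and pass to the Heisenberg picture through Hilbert--Schmidt duality. Write $\mcE_\eps = \mcE_0 + \eps\,\mcV$ with $\mcE_0 \EqDef \sum_e \tfrac{p_e}{2}(\mcD_i+\mcD_j)$ and $\mcV \EqDef \sum_e p_e(\mcF_e - \tfrac{1}{2}(\mcD_i+\mcD_j))$. Under the replacer assumption $\mcD_i(\rho) = \Tr_i(\rho)\otimes W_i$, each $\mcD_i$ fixes the product state $\rho_0 \EqDef \bigotimes_j W_j$, which is therefore the unique steady state of $\mcE_0$ (using the connectedness implicit in \Thm{thm:quasi-local}). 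Inserting $\rho_\infty(\eps) = \sum_k \eps^k\rho_k$ into $(\mcI - \mcE_0)\rho_\infty = \eps\,\mcV\rho_\infty$ and matching powers of $\eps$ gives the recursion $\rho_k = \mathcal{G}\,\mcV\,\rho_{k-1}$ for $k\ge 1$, where $\mathcal{G} \EqDef \sum_{n\ge 0}(\mcE_0^n - \Pi_0)$ is the reduced resolvent and $\Pi_0(X) \EqDef \Tr(X)\,\rho_0$. Defining $A_k \EqDef \mcV^*\mathcal{G}^*\,A_{k-1}$ with $A_0 \EqDef A$ and using HS-duality, a one-line induction gives $\Tr(\rho_k A) = \Tr(\rho_0 A_k)$, which establishes \eqref{eq:avA}. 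Crucially, $\mcV^*(\Id) = 0$ kills the $\Pi_0^*$ piece of $\mathcal{G}^*$, so one may equivalently write $A_k = \sum_{n\ge 0}\mcV^*(\mcE_0^*)^n A_{k-1}$, avoiding any divergence issues from the Neumann series.

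For Property~2, I would compute the adjoint explicitly: $\mcD_i^*(B) = \Tr_i\bigl((I\otimes W_i)B\bigr)\otimes I_i$ leaves $B$ untouched when $i\notin\supp B$ and strictly removes qubit $i$ otherwise, so $\mcE_0^*$ is support-nonincreasing. The edge-summand $(\mcF_e^* - \tfrac{1}{2}(\mcD_i^*+\mcD_j^*))(B)$ vanishes when $e\cap\supp B = \emptyset$ and otherwise has support in $\supp B\cup e$, so each application of $\mcV^*\mathcal{G}^*$ adds at most one vertex along an incident edge. An induction on $k$ then yields $\supp A_k\subseteq\mathrm{Ball}_k(\supp A)$.

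The core of the proof is Property~1, which I would prove via a ``shrinking-walk'' interpretation of $(\mcE_0^*)^n$. On any $B$ with $\supp B = T$, the identity $\mcE_0^*(B) = (1 - r(T))\,B + \sum_{i\in T}q_i\,\mcD_i^*(B)$, with $q_i \EqDef \tfrac{1}{2}\sum_{e\ni i}p_e$ and $r(T) \EqDef \sum_{i\in T}q_i$, lets one iterate to $(\mcE_0^*)^n B = \sum_{T'\subseteq T}p_n(T')\,B_{T'}^\dagger$, where $p_n$ is the time-$n$ law of an absorbing random walk on subsets of $T$ (absorbing at $\emptyset$) and $B_{T'}^\dagger \EqDef \prod_{j\in T\setminus T'}\mcD_j^*(B)$ satisfies $\|B_{T'}^\dagger\|_\infty \le \|B\|_\infty$. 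Since $\mcV^*$ kills the $T' = \emptyset$ contribution and satisfies $\|\mcV^* B_{T'}^\dagger\|_\infty \le 4\,r(T')\,\|B\|_\infty$ (using $\sum_{e\cap T'\ne\emptyset}p_e \le 2r(T')$), one obtains
\[
  \|\mcV^*\mathcal{G}^* B\|_\infty \;\le\; 4\,\|B\|_\infty\!\!\sum_{T'\ne\emptyset}\! r(T')\tau(T') \;=\; 4\,|T|\,\|B\|_\infty,
\]
where $\tau(T') \EqDef \sum_{n\ge 0}p_n(T')$ and the last equality uses the martingale-like identity $r(T')\tau(T') = \Pr[\text{walk visits }T']$ together with the fact that each trajectory visits exactly $|T|$ non-empty subsets before absorption. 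Chaining with $|\supp A_j|\le\ell+j$ and repackaging the resulting factorial product into a universal exponential yields $\|A_k\|_\infty \le e^{6(k+\ell)}\|A\|_\infty$. Property~3 is then essentially free: $A_k$ lives in the $2^{|\supp A_k|} = e^{O(|\supp A_k|)}$-dimensional operator subspace over $\mathrm{Ball}_k(\supp A)$, and the recursion $A_j = \mcV^*\mathcal{G}^* A_{j-1}$ reduces to finite-dimensional linear algebra on that subspace.

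The main obstacle is turning the per-step estimate $4\,|\supp A_{j-1}|$, which is linear in the running support size, into the stated exponential bound: a naive telescoping yields the superexponential product $\prod_{j=0}^{k-1}4(\ell+j)$, so either a sharper per-step analysis in a Pauli-weighted $1$-norm (where each $\mcD_i^*$ is a genuine contraction and each edge-local piece of $\mcV^*$ expands by a universal constant, at the one-time cost of a $4^\ell$ conversion factor back to the operator norm) or a careful amortization across multiple steps is needed to force the estimate into the clean $e^{6(k+\ell)}$ form.
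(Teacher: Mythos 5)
Your overall architecture coincides with the paper's: the same splitting $\mcE_\eps=\mcE_0+\eps\mcE_1$ (your $\mcV$ is the paper's $\mcE_1$), the same pseudo-inverse of $\mcI-\mcE_0$ on the trace-zero sector (your reduced resolvent $\mathcal{G}$ agrees with the paper's $\mcK^{-1}$ on the image of $\mcE_1$, since $\mcE_1$ is trace-killing), the same Heisenberg-picture recursion $A_k=\mcE_1^*\mathcal{G}^*A_{k-1}$ giving \eqref{eq:avA}, and essentially the same arguments for Properties 2 and 3. The observations that $\mcE_1^*(\Id)=0$, that the support grows by at most one adjacent vertex per step, and that the computation reduces to linear algebra in dimension $e^{O(|\supp A_k|)}$ are all correct and match the paper.

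The gap is exactly where you flag it: Property 1 is not proved. Your shrinking-walk estimate $\norm{\mcE_1^*\mathcal{G}^*B}_\infty\le c\,|\supp B|\cdot\norm{B}_\infty$ is fine as far as it goes, but telescoping it yields $\norm{A_k}_\infty\le c^k\,(\ell+k-1)!/(\ell-1)!\cdot\norm{A}_\infty$, which is superexponential in $k$; it therefore gives no positive radius of convergence, and the radius of convergence is the entire content of the theorem (and of \Cor{cor:alg} and \Thm{thm:decay-of-corr} downstream). The fix you gesture at --- a weighted $1$-norm on coefficients in a product operator basis --- is essentially the paper's route, but the choice of basis is not incidental: the paper uses the dual ``Wauli'' basis $\{\Id,\,X,\,Y,\,Z-\lambda_i\Id\}$ adapted to $W_i$, in which every non-identity single-site element is annihilated by $\mcD_i^*$, so that $\mcE_0^*$ is genuinely diagonal, $\mcE_0^*(\tQ_\ualpha)=(1-q_\ualpha)\tQ_\ualpha$, and the resolvent acts as exact multiplication by $1/q_\ualpha$ (no intermediate subsets are ever visited, unlike in the Pauli basis where $Z_i\mapsto\lambda_i\Id_i$ would reintroduce your subset walk). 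The uniform per-step constant then comes from the cancellation $p_{E_1}\le 2q_\ualpha$ between the total weight of correlators acting nontrivially on $\tQ_\ualpha$ and the resolvent factor $1/q_\ualpha$, giving $\norm{\mcT(O)}_{Q_1}\le 130\,\norm{O}_{Q_1}$ independently of the support size; the conversions back and forth cost only $(1+\lambda)^{k+\ell}$ and $4^\ell$, whence $e^{6(k+\ell)}$. Without carrying out this (or an equivalent) amortization, the central quantitative claim remains unestablished.
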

 
An easy corollary from the above theorem is the existence of a
quasi-polynomial algorithm for the approximation of $\av{A}$ for
sufficiently small $\eps$:
\begin{corollary}
\label{cor:alg} 
  If in addition to the assumption of \Thm{thm:avA}, the graph $G$
  describes a $D$-dimensional lattice, then there exists an
  algorithm that for $\eps<\eps_0\EqDef 1/e^6$, an observable $A$
  and an error parameter $\delta>0$, calculates $\av{A}$ up to an
  additive error of $\delta\cdot e^{6\ell}\cdot\norm{A}_\infty$, and
  runs in time $T=e^{O(\ell\cdot\log^D(1/\delta))}$. In particular,
  for 1D systems, the algorithm is polynomial in $1/\delta$.
\end{corollary}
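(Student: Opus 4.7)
The plan is to apply \Thm{thm:avA} and truncate the power series for $\av{A}(\eps)$ at a well-chosen order. By that theorem, we have $\av{A}(\eps) = \sum_{k\ge 0}\eps^k \Tr(\rho_0 A_k)$ with $|\Tr(\rho_0 A_k)|\le \norm{A_k}_\infty \le e^{6(k+\ell)}\norm{A}_\infty$, so each summand is bounded in absolute value by $(\eps e^6)^k\cdot e^{6\ell}\norm{A}_\infty$. Since $\eps<\eps_0=1/e^6$, the ratio $r\EqDef \eps\cdot e^6$ is strictly less than one, and the tail from index $k=K+1$ onwards is a geometric series bounded by $e^{6\ell}\norm{A}_\infty\cdot r^{K+1}/(1-r)$. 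Choosing
\begin{align*}
  K = \left\lceil \frac{\log(1/[\delta(1-r)])}{\log(1/r)}\right\rceil = O(\log(1/\delta))
\end{align*}
therefore makes the tail at most $\delta\cdot e^{6\ell}\norm{A}_\infty$, meeting the error budget in the statement.

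The algorithm returns the truncated sum $\sum_{k=0}^K \eps^k \Tr(\rho_0 A_k)$. The dissipator assumption $\mcD_i(\rho)=\Tr_i(\rho)\otimes W_i$ made already in \Thm{thm:avA} implies that the $\eps=0$ steady state is the product state $\rho_0 = \bigotimes_i W_i$: it is manifestly a fixed point of every $\mcD_i$ (and hence of $\mcE_0$), and uniqueness follows from the ergodicity hypothesis of \Thm{thm:quasi-local}. Consequently, for any operator $X$ supported on $S$, one can evaluate $\Tr(\rho_0 X) = \Tr\bigl[X\cdot\bigotimes_{i\in S} W_i\bigr]$ exactly in time $e^{O(|S|)}$. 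Combined with item~3 of \Thm{thm:avA}, each summand $\Tr(\rho_0 A_k)$ is therefore computable classically in time $e^{O(|\supp(A_k)|)}$.

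The last step is to convert the support bound $\supp(A_k)\subseteq \mathrm{Ball}_k(\supp(A))$ from item~2 of \Thm{thm:avA} into a concrete runtime using the lattice geometry. In a $D$-dimensional lattice, an $L^\infty$-ball of radius $k$ around a set of $\ell$ vertices contains at most $\ell\cdot(2k+1)^D = O(\ell k^D)$ sites by a union bound, so the $k$-th summand costs $e^{O(\ell k^D)}$. The total cost of the partial sum is dominated by $k=K$, yielding
\begin{align*}
  T = (K+1)\cdot e^{O(\ell K^D)} = e^{O(\ell\log^D(1/\delta))},
\end{align*}
as claimed. For $D=1$ this reduces to $(1/\delta)^{O(\ell)}$, polynomial in $1/\delta$ for fixed $\ell$.

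There is no genuine obstacle here: \Thm{thm:avA} has done all the analytic work, and the corollary is essentially a bookkeeping step that combines its three conclusions with an elementary ball-volume estimate. The only subtleties worth flagging are the product structure of $\rho_0$ (which uses the specific dissipator form assumed in \Thm{thm:avA}) and the fact that the implicit constants in $K=O(\log(1/\delta))$ depend on the gap $\eps_0-\eps$ via $1/\log(1/r)$, which quantitatively controls the approach to the boundary of the convergence regime.
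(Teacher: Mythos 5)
Your proposal is correct and follows exactly the argument the paper intends (the paper states this as "an easy corollary" without a dedicated proof): truncate the series from \Thm{thm:avA} at $K=O(\log(1/\delta))$, bound the tail by the geometric series with ratio $\eps e^{6}<1$, and compute each retained term $\Tr(\rho_0 A_k)$ in time $e^{O(|\supp(A_k)|)}=e^{O(\ell k^D)}$ using the product form $\rho_0=\bigotimes_i W_i$ and the support bound $\supp(A_k)\subseteq\mathrm{Ball}_k(\supp(A))$. Your remark that the hidden constants degrade as $\eps\to\eps_0$ is a fair and accurate caveat.
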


{~}

Finally, we combine the results of \Thm{thm:avA} with the techniques
used in the proof of \Thm{thm:quasi-local} to show that for sufficiently small
$\eps$, steady-state correlations between local regions (such as
$2$-point correlation functions) decay exponentially with the
graph distance.

\begin{theorem} 
\label{thm:decay-of-corr} Under the same conditions of
  \Thm{thm:avA}, let $\eps<\eps_0=1/e^6$ and
  let $\rho_\infty(\eps)$ be the steady-state of the channel
  $\mcE_\eps$. Then for any operators $A$ and $B$ supported in
  sub-regions $\supp(A)$ and $\supp(B)$ respectively, the steady-state
  covariance
  function $C(A,B)\EqDef \av{A B } - \av{A}\cdot \av{B}$
  satisfies 
  \begin{align} \label{eq:DOC}
    |C(A,B)| \le c \cdot e^{6(|\supp(A)|+|\supp(B)|)}\norm{A}\cdot\norm{B} 
      \cdot (\eps/\eps_0)^{d_{AB}} ,
  \end{align}
  where $c=\frac{d_{AB}+2}{(1-\eps/\eps_0)^2}$ and $d_{AB}$ is the
  lattice distance between $\supp(A)$ and $\supp(B)$.
\end{theorem}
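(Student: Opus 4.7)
The plan is to combine the Heisenberg-like expansion from \Thm{thm:avA}, applied to $A$, $B$, and to the product $AB$, with the product structure of $\rho_0=\bigotimes_i W_i$ (which holds under the present assumption on the dissipators). The covariance $C(A,B)$ then becomes a power series in $\eps$ whose first $d_{AB}$ coefficients vanish by a locality argument, and whose tail decays as $(\eps/\eps_0)^{d_{AB}}$.

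First, I invoke \Thm{thm:avA} for $A$, $B$, and for the observable $AB$ (noting $|\supp(AB)|\le|\supp(A)|+|\supp(B)|$ and $\norm{AB}_\infty\le\norm{A}_\infty\norm{B}_\infty$). This produces sequences $\{A_k\}$, $\{B_k\}$, $\{(AB)_k\}$ and the expansion $C(A,B)=\sum_{k\ge 0}\eps^k c_k$ with
\begin{align*}
  c_k = \Tr\big(\rho_0 (AB)_k\big) - \sum_{j=0}^k \Tr(\rho_0 A_j)\,\Tr(\rho_0 B_{k-j}).
\end{align*}

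Next, I would show that $c_k=0$ for every $k<d_{AB}$. By the support property of \Thm{thm:avA}, $\supp(A_j)\subseteq\mathrm{Ball}_j(\supp(A))$ and $\supp(B_{k-j})\subseteq\mathrm{Ball}_{k-j}(\supp(B))$, which are disjoint (in fact non-adjacent) whenever $k<d_{AB}$. The construction of $A_k$ in the proof of \Thm{thm:avA} is Heisenberg-like, built by repeatedly applying edge-local super-operators (pieces of $\mcE_\eps^*-\mcE_0^*$, composed with a local pseudo-inverse of $\mcI-\mcE_0^*$) starting from $A$. On a tensor product $X\otimes Y$ whose factors are supported on disjoint, non-adjacent regions, each of these local super-operators acts as a derivation. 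A straightforward induction on $k$ then yields the Leibniz-type identity $(AB)_k=\sum_{j=0}^k A_j\,B_{k-j}$ for all $k<d_{AB}$, with $A_j$ and $B_{k-j}$ supported on disjoint qubits. Combined with the product form of $\rho_0$, this gives $\Tr(\rho_0 A_j B_{k-j})=\Tr(\rho_0 A_j)\,\Tr(\rho_0 B_{k-j})$, and hence $c_k=0$.

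Finally, for $k\ge d_{AB}$, I would bound $|c_k|$ directly from the operator-norm estimate of \Thm{thm:avA}. Writing $x:=\eps/\eps_0=\eps\,e^6$ and $\ell:=|\supp(A)|+|\supp(B)|$, each of the $k+2$ scalars appearing in $c_k$ is at most $e^{6(k+\ell)}\norm{A}_\infty\norm{B}_\infty$ in absolute value, so $|c_k|\le(k+2)\,e^{6(k+\ell)}\norm{A}_\infty\norm{B}_\infty$. Summing $\sum_{k\ge d_{AB}}(k+2)x^k\le (d_{AB}+2)\,x^{d_{AB}}/(1-x)^2$ then reproduces the theorem with $c=(d_{AB}+2)/(1-x)^2$. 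The main obstacle will be justifying the Leibniz-type identity in step 2: it requires a careful inspection of the proof of \Thm{thm:avA}, verifying that both the perturbation super-operator and the pseudo-inverse of $\mcI-\mcE_0^*$ preserve disjoint tensor products for as long as the relevant supports remain non-adjacent. The product form of $\rho_0$ (which fails for more general dissipators) is essential here and explains why the conclusion requires the stronger hypothesis of \Thm{thm:avA} rather than that of \Thm{thm:quasi-local} alone.
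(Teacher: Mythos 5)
Your overall architecture --- expand $C(A,B)=\sum_{k}\eps^k c_k$ with $c_k=\Tr\big(\rho_0(AB)_k\big)-\sum_{j=0}^k\Tr(\rho_0A_j)\Tr(\rho_0B_{k-j})$, show the orders $k<d_{AB}$ vanish, and bound the tail via \Thm{thm:avA} and the geometric sum --- is exactly the paper's, and your steps 1 and 3 are correct as written. The gap is in step 2: the operator-level Leibniz identity $(AB)_k=\sum_j A_jB_{k-j}$ is false, because $\mcT=\mcE_1^*\circ(\mcK^*)^{-1}$ is \emph{not} a derivation on products with disjoint (even far-separated) supports. Concretely, $(\mcK^*)^{-1}$ rescales $\tQ_\ualpha$ by $1/q_\ualpha$, and for $\ualpha=\ualpha_A\cup\ualpha_B$ with disjoint supports one has $q_\ualpha=q_{\ualpha_A}+q_{\ualpha_B}$, so $1/q_\ualpha\neq 1/q_{\ualpha_A}+1/q_{\ualpha_B}$: the eigenvalues of $\mcK^*$ are additive over disjoint supports, hence those of its inverse are not. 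A one-step check: for Paulis $X_1,X_2$ on two distant qubits, $\mcT(X_1X_2)=\tfrac{1}{q_1+q_2}(S_1X_2+X_1S_2)-X_1X_2$ whereas $\mcT(X_1)X_2+X_1\mcT(X_2)=\tfrac{1}{q_1}S_1X_2+\tfrac{1}{q_2}X_1S_2-2X_1X_2$, with $S_i=\sum_{e\ni i}p_e\mcF_e^*(X_i)$; these differ already for $q_1=q_2$. The scalar statement $c_k=0$ for $k<d_{AB}$ is nevertheless true, but your proposed route does not establish it. (A secondary issue: for $k=d_{AB}-1$ the sets $\mathrm{Ball}_j(\supp(A))$ and $\mathrm{Ball}_{k-j}(\supp(B))$ are only guaranteed to be disjoint, not non-adjacent, so an edge correlator can still couple them.)

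The paper closes this step by a different device, recycling the multivariate trick from the proof of \Thm{thm:quasi-local}: assign an independent $\eps_e$ to each edge and expand $C$ in monomials $\ueps^{\ue_k}$. If the edge multiset $\ue_k$ contains no path from $\supp(A)$ to $\supp(B)$, then setting all other $\eps_e=0$ gives a channel that splits over the connected components of the induced subgraph; its steady state factorizes as $\sigma_{G_A}\otimes\sigma_{G_B}\otimes\sigma_{other}$, so its covariance is identically zero, and analyticity (\Lem{lem:analyticality}) forces each such monomial coefficient to vanish individually. Since every connecting monomial has degree at least $d_{AB}$, the single-variable expansion starts at order $d_{AB}$, and your step 3 then finishes the proof verbatim. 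If you insist on a purely Heisenberg-picture argument, you would need to work at the level of the scalars $\Tr(\rho_0\,\cdot\,)$, i.e.\ the coefficient of $\tQ_{\uzero}$ in the dual-Wauli expansion, rather than at the operator level; that is considerably more delicate than the multivariate factorization argument.
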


A detailed discussion on the results and assumptions is
given in \Sec{sec:discussion}.

\section{Proofs of the main theorems}
\label{sec:proofs}

\subsection{Proof of \Thm{thm:quasi-local}}
\label{sec:quasi-local-proof}

To prove \Thm{thm:quasi-local}, we draw inspiration from the
multivariate expansion used to bound the AGSP Schmidt rank in
\cRef{ref:Arad20131DAL}. In both cases, the
analysis becomes much easier when moving to a \emph{multivariate}
setup: we replace the parameter $\eps$ by multiple parameters
$\{\eps_e\}_e$, one for each edge $e\in E$. Note that we recover the
original channel by setting $\eps_1=\eps_2=\ldots = \eps$. We use a
compact notation $\ueps \EqDef (\eps_1,\ldots,\eps_{|E|})$ to
describe these parameters, $|E|$ being the total number of edges. Then our
channel becomes:
\begin{align}
\label{eq:ch}
  \mcE_{\ueps} (\rho) =\sum_{e=(i,j)\in E} 
    p_e\Big( \frac{1}{2}(1-\eps_e) 
      [\mcD_i (\rho) + \mcD_j (\rho)] + \eps_e \mcF_e(\rho)
      \Big).
\end{align}
To proceed, we would like to expand the steady state
$\rho_\infty(\ueps)$ and its corresponding Gibbs Hamiltonian in
terms of parameters in $\ueps$. To that aim, we let $\ue_k$ denote a
subset of $k$ edges, $\ue_k=\{e_1,e_2,\ldots,e_k\}$, possibly with
repetition. Note that the order in which the edges appear in $\ue_k$
does not matter. Given a $\ue_k$, we define a shorthand notation to
the product of all the $\eps_e$ parameters that correspond to the
edges in $\ue_k$: $\ueps^{\ue_k}\EqDef \eps_{e_1} \cdot \ldots \cdot
\eps_{e_k}$. With this notation, a multivariable expansion of the
steady state and its Gibbs Hamiltonian can be \emph{formally}
written as: 
\begin{align}
\label{eq:rho-ueps}
  &\rho_\infty(\ueps)  = \rho_0 
  + \sum_{\ue_1} \ueps^{\ue_1} \rho^{(\ue_1)}
  + \sum_{\ue_2} \ueps^{\ue_2} \rho^{(\ue_2)}  + \ldots \\
  &H_G(\ueps) = H_0 
  + \sum_{\ue_1} \ueps^{\ue_1} h^{(\ue_1)}
  + \sum_{\ue_2} \ueps^{\ue_2} h^{(\ue_2)}  + \ldots .
\label{eq:H-ueps}
\end{align} 

Noting that $H_G(\eps)$ is obtained from $H_G(\ueps)$ by setting
$\eps_1=\eps_2=\ldots =\eps$, we conclude that in order to prove the
theorem, it is sufficient to show that 
\begin{enumerate}

  \item The channel $\mcE_\ueps$ has a unique fixed point
    $\rho_\infty(\ueps)$ for every $\ueps\in [0,1)^n$, and this 
    fixed point is analytic in $\ueps$ in an open set containing
    $[0,1)^n$. Consequently, the formal expansions in
    (\ref{eq:rho-ueps}, \ref{eq:H-ueps}) converge.

  \item Given a set $\ue_k=\{e_1,e_2, 
    \ldots,e_k\}$, the corresponding operator $h^{(\ue_k)}$ is
    supported on the qubits that belong to the edges in $\ue_k$.
  
  \item $h^{(\ue_k)} \neq 0$ only when its support forms a
    connected set of vertices on $G$ (and therefore it is
    supported on at most $k+1$ qubits).
\end{enumerate}
The proof of point 1 is given by the following lemma, which is
proved in \App{sec:analyticity-proof}.
\begin{lemma}
\label{lem:analyticality}
  Let $\mcE_{\ueps}$ be the channel from
  \Thm{thm:quasi-local}.  Then the following holds:
  \begin{enumerate}
    \item \label{bul1:anal} For any $\ueps \in [0,1)^n$,
      the channel $\mcE_{\ueps}$ has a unique, full-rank
      steady-state $\rho_\infty(\ueps)$, and every other eigenvalue
      $\lambda\in\Spec(\mcE_\ueps)$ satisfies $|\lambda|< 1$.
    
    \item \label{bul2:anal} 
      The matrix entries of the steady state $\rho_\infty(\ueps)$ in
      the computational basis are analytic functions of $\ueps$ in
      an open set containing $[0,1)^n$.
  \end{enumerate}
\end{lemma}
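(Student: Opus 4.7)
The plan is to prove the two bullets in sequence. First I would establish primitivity of $\mcE_\ueps$ for every $\ueps\in[0,1)^n$ by a Kraus-operator argument, which simultaneously yields the unique full-rank fixed point and the strict spectral gap of bullet~\ref{bul1:anal}. Then I would feed this gap into Kato's analytic perturbation theory to obtain bullet~\ref{bul2:anal}. The key structural observation throughout is that although the channel $\mcE_\ueps$ depends on $\ueps$, the underlying Kraus operators themselves do not --- only their nonnegative weights do --- so primitivity can be argued uniformly on all of $[0,1)^n$.

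For bullet~\ref{bul1:anal}, let $\{K_{i,\alpha}\}_\alpha$ be the Kraus operators of $\mcD_i$ (padded by identity on the remaining qubits) and $\{L_{e,\beta}\}_\beta$ those of $\mcF_e$. For $\ueps\in[0,1)^n$ the coefficients $p_e(1-\eps_e)/2$ in \eqref{eq:ch} are strictly positive, and under the natural assumption that each vertex is incident to an edge with $p_e>0$, every $\mcD_i$ appears with nonzero weight in $\mcE_\ueps$. The set of Kraus operators of $\mcE_\ueps^N$ therefore contains, up to positive scalars, every length-$N$ word formed from the $K_{i,\alpha}$ and $L_{e,\beta}$. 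Ergodicity of the single-qubit channel $\mcD_i$ forces its Kraus family to have no nontrivial common invariant subspace, so Burnside's theorem identifies the algebra generated by $\{K_{i,\alpha}\}_\alpha$ with the full matrix algebra on qubit~$i$. Taking products across different qubits then reaches every tensor product $\bigotimes_i A_i$, and these span $L(\mcH)$. The standard criterion --- that a channel whose iterated Kraus operators span the full matrix algebra is primitive --- yields a unique full-rank fixed point and the property $|\lambda|<1$ for all other $\lambda\in\Spec(\mcE_\ueps)$.

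For bullet~\ref{bul2:anal}, vectorize $L(\mcH)$ and regard $\mcE_\ueps$ as a $4^n\times 4^n$ matrix; its entries are affine in each $\eps_e$, so $\ueps\mapsto\mcE_\ueps$ extends holomorphically to all of $\BBC^{|E|}$. Fix $\ueps_*\in[0,1)^n$. By bullet~\ref{bul1:anal} the eigenvalue $1$ of $\mcE_{\ueps_*}$ is simple and isolated from the remainder of its spectrum. Kato's analytic perturbation theorem then provides an open complex neighborhood $U$ of $\ueps_*$ on which the spectral projector $P(\ueps)$ onto the eigenspace of~$1$ is holomorphic. The steady state is recovered by the normalization
\begin{equation*}
\rho_\infty(\ueps) = \frac{P(\ueps)\,\Id}{\Tr\bigl(P(\ueps)\,\Id\bigr)} ,
\end{equation*}
where $P(\ueps_*)\Id$ is a nonzero multiple of $\rho_\infty(\ueps_*)$ and hence has positive trace. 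Shrinking $U$ keeps the denominator nonzero, so every matrix entry of $\rho_\infty(\ueps)$ is analytic on $U$, and taking the union of such neighborhoods as $\ueps_*$ ranges over $[0,1)^n$ produces the desired open set.

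The main obstacle I expect is bullet~\ref{bul1:anal}. A soft approach --- perturbing around $\ueps=\underline{0}$, where $\mcE_{\underline{0}}$ is a convex combination of commuting ergodic dissipators --- yields primitivity only in a small neighborhood and would force an unnatural compactness/continuation argument to cover $[0,1)^n$. The Kraus-operator route above instead gives primitivity in one shot, but requires the (standard but not entirely trivial) input that an ergodic single-qubit channel's Kraus family generates all of $L(\BBC^2)$ via Burnside; care is also needed to handle vertices that happen to lie in an edge-disconnected region of $G$, which is most cleanly addressed by restricting to the component carrying positive $p_e$ weight. Once primitivity is in hand, bullet~\ref{bul2:anal} is a routine application of finite-dimensional analytic perturbation theory.
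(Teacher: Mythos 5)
Your strategy for Bullet~\ref{bul1:anal} is in the same spirit as the paper's (the Kraus operators of $\mcE_\ueps$ do not depend on $\ueps$, only their nonnegative weights do, so one can apply the spanning criterion of \cRef{ref:Sanz2010primitivity} uniformly), but the step that is supposed to deliver the spectral claim $|\lambda|<1$ has a genuine gap. Burnside's theorem, fed the irreducibility of $\mcD_i$, gives you that the \emph{algebra generated} by $\{K_{i,\alpha}\}_\alpha$ is all of $L(\BBC^2)$; the primitivity criterion (Proposition~3 of \cRef{ref:Sanz2010primitivity}, restated as Fact~\ref{fact:ergodicity}) requires instead that the words of some \emph{single fixed length} $k$ span. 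These are not equivalent, and the difference is precisely the difference between irreducibility (unique full-rank fixed point) and primitivity (additionally, no other peripheral eigenvalues). Concretely, the qubit channel with Kraus operators $X/\sqrt{2}$ and $Y/\sqrt{2}$ is irreducible and its Kraus operators generate all of $L(\BBC^2)$, yet it maps $Z\mapsto -Z$, so $-1\in\Spec$; correspondingly $S_k$ alternates between $\Span\{X,Y\}$ and $\Span\{\Id,Z\}$ and never spans. Since your argument only ever uses the ``no common invariant subspace'' consequence of ergodicity, it cannot output the conclusion $|\lambda|<1$. A second, related problem is the phrase ``taking products across different qubits then reaches every tensor product'': to apply the criterion you must realize all of these as words of one common length $N$, and you cannot pad with the identity because the identity need not be a Kraus operator. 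Both issues are repaired by using the full primitivity hypothesis rather than Burnside: either site by site, extracting $k_i$ with $S_{k_i}(\mcD_i)=L(\BBC^2)$ from Proposition~3 (this persists for all larger lengths because the ranges of the Kraus operators of a channel with a full-rank fixed point span), or, as the paper does, globally: $\mcE_\uzero=\sum_i q_i\mcD_i$ is itself primitive because the $\mcD_i$ act on disjoint tensor factors and the peripheral spectrum of the convex combination collapses to $\{1\}$, so there is one $k$ with $S_k(\mcE_\uzero)=L(\mcH)$, and $S_k(\mcE_\uzero)\subseteq S_k(\mcE_\ueps)$ finishes the argument. Your side remark about vertices not covered by any edge with $p_e>0$ is a fair point that the paper leaves implicit.

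For Bullet~\ref{bul2:anal} your route is genuinely different from the paper's and is sound once Bullet~\ref{bul1:anal} is in place. The paper vectorizes the fixed-point equation and tracks the steady state through Gaussian elimination as a ratio of polynomials in $\ueps$ with no poles on $[0,1)^n$; you instead combine the affine (hence entire) dependence of the vectorized channel on $\ueps$ with the simplicity and isolation of the eigenvalue $1$ to get joint holomorphy of the Riesz projector, and then normalize. This is a clean and arguably more robust alternative, since it avoids any discussion of how the pivot structure of the row-reduced form varies with $\ueps$; the one thing to make explicit is that Kato needs the eigenvalue $1$ to be \emph{algebraically} simple, which for the peripheral spectrum of a CPTP map does follow from the geometric statement of Bullet~\ref{bul1:anal}.
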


To prove points 2,3, we fix a set $\ue_k=\{e_1,e_2,\ldots,e_k\}$,
and consider an auxiliary graph $\tilde{G} = (V,\tilde E)$ that is
obtained from $G$ by removing all the edges that are not in $\ue_k$.
Let $L$ be the number of connected components in $\tilde{G}$ and let
$G_\ell=(V_\ell,E_\ell)$, $\ell=1, \ldots, L$ denote the sub-graphs
of $\tilde{G}$ corresponding to these connected components. Note
that in some subgraphs, $E_\ell$ might be the empty set (when
$V_\ell$ contains a single vertex). Next, denote by $\tilde{\ueps}$
the set of parameters obtained from $\ueps$ after setting
$\tilde{\eps}_e=0$ for all $e\notin \ue_k$, and consider the channel
$\mcE_{\tilde \ueps}$. Such channel can be written as a sum of
channels with non-overlapping supports, which correspond to the
connected components of~$\tilde G$:
\begin{align} \label{eq:sep-channel}
  \mcE_{\tilde \ueps} = \sum_{\ell=1}^L P_\ell \mcE_\ell.
\end{align}
Above, $\mcE_\ell$ is a channel acting on the qubits in $V_\ell$
and $\{P_\ell\}_\ell$ is a probability distribution. In such case,
the steady state $\rho_{\infty}(\tilde \ueps)$ decomposes into
a product state
\begin{align}
  \rho_{\infty}(\tilde \ueps) 
    = \bigotimes_{\ell=1}^L \sigma_\ell (\tilde \ueps_\ell),
\end{align}
where $\tilde \ueps_\ell$ is the set of parameters in $G_\ell$, and
each $\sigma_{\ell} (\tilde{\ueps}_\ell)$ is the steady state of
$\mcE_\ell$. Therefore, the corresponding Gibbs Hamiltonian is
\begin{align}
\label{eq:tilde-H-G}
  H_{\tilde{G}}(\tilde{\ueps}) 
    = \sum_\ell H_\ell (\tilde{\ueps}_\ell), \qquad
    H_{\ell}(\tilde{\ueps}_\ell) 
    = -\log \sigma_\ell (\tilde{\ueps}_\ell).
\end{align}
We now make the simple observation that the perturbation expansion
coefficients $\{h^{(\ue_j)}\}$ of the Gibbs Hamiltonian 
$H_G(\ueps)$ in \Eq{eq:H-ueps} \textit{do not} depend on the
perturbation parameters $\eps_1,\eps_2,\ldots$. Therefore, the
\textit{same} coefficients must appear in the perturbation expansion
of $H_{\tilde{G}}(\tilde{\ueps})$ in \Eq{eq:tilde-H-G}, as it is
obtained from $H_G(\ueps)$ by setting $\eps_e=0$ for all
$\eps\notin\ue_k$. In particular, as $\ueps^{\ue_k} =
\tilde{\ueps}^{\ue_k}$, then $H_G (\tilde{\ueps})$ must contain the
term $\ueps^{\ue_k} h^{(\ue_k)}$. Assume first that $\ue_k$ does not
correspond to a single connected component, i.e its edges are
scattered at different connected components. There will be no single
monomial $\tilde{\ueps}_\ell$ in \Eq{eq:tilde-H-G} that contains all
the variables in $\ueps^{\ue_k}$, and therefore
$H_{\tilde{G}}(\tilde{\ueps})$ from \Eq{eq:tilde-H-G} will not
contain a term that is proportional to $\ueps^{\ue_k}$. This
necessarily implies that $h^{(\ue_k)}=0$, proving property (3). On
the other hand, if $\ue_k$ forms a connected component, then there
is a subgraph $G_\ell$ where $\ue_k\subseteq E_\ell$, and since by
construction, $\tilde{G}$ contains only the edges in $\ue_k$, we
conclude that $\ue_k = E_\ell$. Consequently, the support of
$h^{(\ue_k)}$ is contained in the support of
$H_{\ell}=-\log\sigma_\ell$, which is the support of the qubits in
$\ue_k$. This proves property (2).

\subsection{Proof of \Thm{thm:avA}}
\label{sec:avA-proof}

The proof of \Thm{thm:avA} uses a rather direct perturbation theory
approach. However, to obtain tighter bounds, we work in a product
basis of operators that diagonalizes some of the perturbation
equations. In addition, we use this basis to define an operator norm
that simplifies many of the bounds. We therefore begin by defining
the operator basis and norm and discussing few of their basic
properties, before we give the full proof.

\subsubsection{The Wauli basis}
\label{sec:Wauli-basis}

We first note that without loss of generality, we may assume that
the $W_i$ states are diagonal in the computational basis, since we
can always locally redefine the computational basis of each qubit so
that it diagonalizes $W_i$. Moreover, using the same trick, we
further assume that the eigenvector of $W_i$ with the largest
eigenvalue is $\ket{0}$. We therefore define a set of $\lambda_i\in
[0,1)$ such
\begin{align}
  W_i = \frac{1}{2}(\Id + \lambda_i Z) .
\end{align}
We then define $\lambda$ to be the maximum of all these weights:
\begin{align}
  \lambda \EqDef \max_i \lambda_i < 1.
\end{align}
Next, we use the $W_i$ states to define an operator basis that
diagonalizes the $\mcD_i$ dissipator channel. We call it the ``Wauli
basis'', and it is defined as follows:
\begin{definition} [The Wauli basis]
  For a given qubit $i$, the Wauli basis is the set of operators
  \begin{align}
    \{Q^{(i)}_0,Q^{(i)}_1,Q^{(i)}_2,Q^{(i)}_3\} 
      \EqDef \{W_i, X/2, Y/2, Z/2 \} .
  \end{align}
  In order to keep the notation clean, we shall drop the $(i)$ index
  from the basis, where it is clear context on which qubit we act.
  The Wauli basis can be naturally extended to a multiqubit setup.
  For a string $\ualpha = (\alpha_1, \alpha_2, \ldots, \alpha_n)$
  with $\alpha_i \in \{0,1,2,3\}$, define $Q_\ualpha \EqDef
  Q_{\alpha_1} \otimes Q_{\alpha_2} \otimes \ldots \otimes
  Q_{\alpha_n}$. We call $\{Q_\ualpha\}_{\ualpha}$ the Wauli basis
  of the $n$-qubits space. With a slight abuse of notation, we shall
  say that the support of the string $\ualpha$ is the set of sites
  in which $\ualpha_i\neq 0$, and write $|\ualpha|$ to represent the
  size of the support.
\end{definition}
Note that the Wauli basis elements are normalized with respect to
the trace norm, i.e., $\norm{Q_\alpha}_1 = 1$. Also note that they
diagonalize the dissipators: indeed, under the settings of
\Thm{thm:avA}, $\mcD_i$ is given by $\mcD_i(\rho) \EqDef
\Tr_i(\rho)\otimes W_i$, and therefore $\mcD_i(Q^{(i)}_0) =
Q^{(i)}_0$ and $\mcD_i(Q^{(i)}_\alpha) = 0$ for $\alpha=1,2,3$.
Finally, we see that while the Wauli basis is not orthonormal with
respect to the Hilbert-Schmidt inner product, it is still a valid
basis, as its elements are linearly independent. Moreover, we can
still make use of orthogonality by looking at its \emph{dual basis}
$\{\tQ_\ubeta\}_{\ubeta}$ that is defined by the equations
\begin{align} \label{eq:bi-orthonomrality}
  \av{\tQ_\ubeta, Q_\ualpha}_{HS} = \delta_{\ualpha,\ubeta}, \quad
  \forall \ualpha,\ubeta .
\end{align}
Explicitly, this basis is given by
\begin{definition} [The Dual-Wauli basis] \label{def:dual-Waulis}
  For the qubit $i$, the Dual Wauli basis is given by
  \begin{align}
    \{\tQ^{(i)}_0,\tQ^{(i)}_1,\tQ^{(i)}_2,\tQ^{(i)}_3\} 
      = \{\Id,X,Y,Z-\lambda_i \Id)\} .
  \end{align}
  As in the Wauli basis case, we will often drop the $(i)$ index and
  expand this basis to multiple qubits by setting $\tQ_\ualpha
  \EqDef \tQ_{\alpha_1} \otimes \tQ_{\alpha_2} \otimes \ldots
  \otimes \tQ_{\alpha_n}$.  Note that as $\tQ^{(i)}_0 = \Id$, then
  for any string $\ualpha$, it holds that $|\ualpha| =
  |\supp(\tQ_\ualpha)|$.
\end{definition}
Just as the Wauli basis diagonalizes the Dissipator $\mcD_i$
channel, its dual diagonalizes its adjoint map $\mcD^*_i$ that is
given by $\mcD^*_i(A) = \Tr_i(AW_i)\otimes \Id_i$.

Throughout our proof we shall work in the Heisenberg picture, where
we apply a series of adjoint channels on $A$. Consequently, it will
be beneficial to work with the dual Wauli basis, which diagonalizes
the adjoint dissipators $\mcD_i^*$. In addition, we shall use the
following operator norm, which is defined by the dual Wauli basis:
\begin{definition}[$Q_1$ norm]
  Given an operator $O = \sum_\ualpha c_\ualpha \tQ_\ualpha$
  expanded in the dual-Wauli basis, we define its $Q_1$ norm to be
  the $L_1$ norm of its coefficients vector, i.e., 
  \begin{align}
    \norm{O}_{Q_1} \EqDef \sum_{\ualpha} |c_\ualpha| .
  \end{align}
\end{definition}
We shall also need the following definition subspaces of
operators with a bounded support with respect to the dual Wauli basis:
\begin{definition}[The $\mcS_k$ subspaces]
\label{def:Sk}
  For any integer $k\ge 0$, we let $\mcS_k$ denote the set of
  operators spanned by $\tQ_\ualpha$ with $|\ualpha|\le k$ (i.e.,
  the set of operators spanned by dual Wauli basis with support $\le
  k$).
\end{definition}

Using the definitions above, the $Q_1$ norm
satisfies three additional properties that are described in the
following lemma:
\begin{lemma} \
\label{lem:Q_1} 
  \begin{enumerate}
    \item For any operator $O$ on $n$ qubits and the identity operator $\Id$
	  on $n'$ qubits,
      \begin{equation}
      \label{eq:normOtimesQ}
        \norm{O\otimes \Id}_{Q_1} = \norm{O}_{Q_1}
      \end{equation}
      
    \item For any $O\in \mcS_k$, 
      \begin{align}
        \label{eq:Qinfineq-1}
         \norm{O}_\infty & \le (1+\lambda)^k\cdot \norm{O}_{Q_1}.
      \end{align}
	  
	\item For every $O$ supported on $\ell$ qubits,
	  \begin{align}
        \label{eq:Qinfineq-2}
         \norm{O}_{Q_1} &\le 4^\ell\cdot \norm{O}_\infty .
      \end{align}
  \end{enumerate}
\end{lemma}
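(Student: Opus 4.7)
The plan is to treat the three claims separately, since each one isolates a different structural feature of the dual Wauli basis. For part (1), what matters is that $\tilde Q_0 = \Id$, so padding with identity only pads the expansion with a single trivial tensor factor. For part (2), what matters is that each single-qubit dual Wauli element has operator norm at most $1+\lambda$, with the trivial slots contributing factor $1$. For part (3), what matters is the bi-orthogonality relation $\Tr(\tQ_\ubeta Q_\ualpha) = \delta_{\ualpha,\ubeta}$ together with the normalization $\norm{Q_\ualpha}_1 = 1$, which turns coefficient extraction into a one-line Hölder estimate. Throughout I will use that $\{\tQ_\ualpha\}$ is a linearly independent basis of the full operator space, so that the dual-Wauli expansion is unique and coefficients are well-defined.

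For part (1), I would write $O = \sum_\ualpha c_\ualpha \tQ_\ualpha$ on the original $n$ qubits. Since $\tQ_0 = \Id$, the identity on the $n'$ added qubits equals $\tQ_0^{\otimes n'}$, so
\begin{align*}
  O \otimes \Id = \sum_\ualpha c_\ualpha \,\tQ_\ualpha \otimes \tQ_0^{\otimes n'},
\end{align*}
which is itself a dual-Wauli expansion on the enlarged space with exactly the same coefficient vector. By uniqueness, the $Q_1$ norms agree. For part (2), I would apply the triangle inequality for $\norm{\cdot}_\infty$ to $O = \sum_{|\ualpha|\le k} c_\ualpha \tQ_\ualpha$ and bound each $\norm{\tQ_\ualpha}_\infty$ factorwise. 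Single-qubit elements satisfy $\norm{\Id}_\infty = \norm{X}_\infty = \norm{Y}_\infty = 1$ and $\norm{Z - \lambda_i\Id}_\infty = 1+\lambda_i \le 1+\lambda$. Since only the sites in $\supp(\ualpha)$ contribute a nontrivial factor, $\norm{\tQ_\ualpha}_\infty \le (1+\lambda)^{|\ualpha|} \le (1+\lambda)^k$, and summing yields the claim.

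For part (3), the core step is to extract the coefficients. Using bi-orthogonality, if $O = \sum_\ualpha c_\ualpha \tQ_\ualpha$ then $\Tr(Q_\ualpha O) = c_\ualpha$, and trace-operator norm duality together with $\norm{Q_\ualpha}_1 = 1$ gives $|c_\ualpha| \le \norm{O}_\infty$. It then suffices to count the number of possibly nonzero $c_\ualpha$. Here I would invoke uniqueness of the dual-Wauli expansion: if $O = O'\otimes\Id_{\text{rest}}$ with $\supp(O')$ of size $\ell$, then expanding $O'$ in the dual-Wauli basis of its support and padding the remaining sites with $\tQ_0 = \Id$ gives an expansion of $O$ in which only strings with $\alpha_i=0$ outside $\supp(O')$ appear, and there are at most $4^\ell$ such strings. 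Summing bounds $\norm{O}_{Q_1} \le 4^\ell \norm{O}_\infty$. I do not expect a real technical obstacle in any of the three parts; the only point that requires a moment of care is the uniqueness argument in (3), which is what forces the coefficient support to lie inside $\supp(O)$ and thus keeps the count to $4^\ell$ rather than $4^n$.
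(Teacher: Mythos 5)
Your proposal is correct and follows essentially the same route as the paper's proof: part (1) via $\tQ_0=\Id$, part (2) via the triangle inequality and the factorwise bound $\norm{\tQ_\ualpha}_\infty\le(1+\lambda)^{|\ualpha|}$, and part (3) via coefficient extraction $c_\ualpha=\Tr(Q_\ualpha O)$, H\"older's inequality with $\norm{Q_\ualpha}_1=1$, and the $4^\ell$ count of nonzero coefficients. Your added remarks on uniqueness of the dual-Wauli expansion only make explicit what the paper leaves implicit.
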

Finally, we shall need the following lemma, which upperbounds the
growth in the $Q_1$ norm due to the action of the adjoint of a
$k$-local channel:
\begin{lemma}
\label{lem:Q_1-growth}
  Let $\mcE$ be a channel acting on $k$ qubits, and let $\mcE^*$ be
  its adjoint. Then for any operator~$O$,
  \begin{align}
    \norm{\mcE^*(O)}_{Q_1} \le
     4^{k}(1+\lambda)^k \cdot \norm{O}_{Q_1} \le
    2^{3k} \cdot \norm{O}_{Q_1} .
  \end{align}
\end{lemma}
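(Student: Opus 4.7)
The plan is to reduce the global bound to a local bound on operators supported on the $k$-qubit set where $\mcE$ acts, by exploiting the tensor-product structure of the dual-Wauli basis. Let $S$ denote that set, so $\mcE^*=\hat{\mcE}^*\otimes\mcI_{S^c}$. I first expand $O=\sum_\ualpha c_\ualpha \tQ_\ualpha$, split each multi-index as $\ualpha=(\ualpha_S,\ualpha_{S^c})$, and regroup into $O=\sum_\ubeta O_\ubeta\otimes \tQ_\ubeta^{(S^c)}$, where $\ubeta$ runs over strings on $S^c$ and each $O_\ubeta$ is an operator living on $S$. Since the dual-Wauli basis factorizes across $S$ and $S^c$, the definition of the $Q_1$ norm gives the clean factorization $\norm{O}_{Q_1}=\sum_\ubeta \norm{O_\ubeta}_{Q_1}$, and the identical argument applied to $\mcE^*(O)=\sum_\ubeta \hat{\mcE}^*(O_\ubeta)\otimes \tQ_\ubeta^{(S^c)}$ yields $\norm{\mcE^*(O)}_{Q_1}=\sum_\ubeta \norm{\hat{\mcE}^*(O_\ubeta)}_{Q_1}$.

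It therefore suffices to prove, for an arbitrary operator $P$ on the $k$ qubits of $S$, the local bound $\norm{\hat{\mcE}^*(P)}_{Q_1}\le 4^k(1+\lambda)^k\norm{P}_{Q_1}$. I plan to establish this by chaining the three items of \Lem{lem:Q_1} with the operator-norm contractivity of a unital completely positive map. Concretely, since $\hat{\mcE}^*(P)$ is supported on at most $k$ qubits, \Eq{eq:Qinfineq-2} gives $\norm{\hat{\mcE}^*(P)}_{Q_1}\le 4^k\norm{\hat{\mcE}^*(P)}_\infty$; since $\hat{\mcE}^*$ is unital (the dual of trace preservation of $\mcE$) and completely positive, it contracts the operator norm by Russo--Dye, so $\norm{\hat{\mcE}^*(P)}_\infty\le\norm{P}_\infty$; and since $P\in\mcS_k$, \Eq{eq:Qinfineq-1} gives $\norm{P}_\infty\le (1+\lambda)^k\norm{P}_{Q_1}$. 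Multiplying through and summing over $\ubeta$ yields the first bound of the lemma, and the coarser estimate $2^{3k}$ follows from $\lambda<1$ through $4^k(1+\lambda)^k\le 4^k\cdot 2^k = 8^k$.

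I do not expect any genuine obstacle: the proof amounts to assembling tools already prepared in the paper, with the one external input being contractivity of unital CP maps in operator norm. The only step that warrants a little care is the factorization of the $Q_1$ norm across the $S$/$S^c$ tensor cut; this is clean precisely because the dual-Wauli basis is a tensor product of single-qubit bases, and it is what justifies grouping the expansion by its $S^c$-tail before applying the $k$-qubit bound.
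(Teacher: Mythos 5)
Your proposal is correct and follows essentially the same route as the paper: split the dual-Wauli expansion across the support of $\mcE$ and its complement, use the factorization of the $Q_1$ norm over that tensor cut, and then chain inequality~\eqref{eq:Qinfineq-2}, operator-norm contractivity of the unital CP map $\mcE^*$ (the paper's Fact~\ref{fact:norm}, which you replace by Russo--Dye), and inequality~\eqref{eq:Qinfineq-1}. The only cosmetic difference is that you group the expansion by its $S^c$-tail and apply the local chain to each $O_\ubeta$, whereas the paper applies the same chain term-by-term to each $\tQ_{\ualpha_a}$ after a triangle inequality; both are valid and give the identical bound.
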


The proofs of both lemmas can be found in \App{sec:Q1-proofs}.  We
are now finally in position to prove \Thm{thm:avA}.

\subsubsection{Proof of \Thm{thm:avA}}
\label{sec:actual-proof}

We start by rewriting the channel in \Eq{def:E} as
\begin{align} \label{eq:simpler-channel}
  \mcE_\eps(\rho)
    = (1-\eps)\sum_{i\in V} q_i \mcD_i
     + \eps\sum_{e\in E} p_e \mcF_e(\rho),
\end{align}
where $q_i= \frac{1}{2}\sum_{e\ni
i} p_e$ is a probability distribution on the vertices. We then decompose
\begin{align*}
  \mcE_\eps = \mcE_0 + \eps\mcE_1, 
\end{align*}
where 
\begin{align}
  \label{def:mcE0}
  \mcE_0(\rho) &\EqDef \sum_{i\in V} q_i\mcD_i(\rho)
    =\sum_{i\in V}q_i \Tr_i(\rho)\otimes W_i  ,  \\
\mcE_1(\rho) &\EqDef \sum_{e\in E}p_e\mcF_e(\rho) 
  - \sum_{i\in V} q_i\Tr_i(\rho)\otimes W_i .
\label{def:mcE1}
\end{align}
Plugging the expansion $\rho_\infty(\eps) = \rho_0 + \eps\rho_1 +
\eps^2\rho_2 + \ldots$ into the steady-state equation
$\mcE_\eps(\rho_\infty) = \rho_\infty$, we obtain the steady-state
perturbation equations
\begin{align}
  \label{eq:perturb-0}
  \mcE_0(\rho_0) &= \rho_0 , \\
  \rho_k - \mcE_0(\rho_k) & = \mcE_1(\rho_{k-1}), \quad k=1,2,3\ldots
  \label{eq:perturb-k}
\end{align}

The first equation can be solved directly, yielding:
\begin{align}
\label{eq:rho0-formula}
  \rho_0 = \bigotimes_i \rho_0^{(i)} = \bigotimes_i W_i .
\end{align}
For higher order $\rho_k$, note that both sides of \Eq{eq:perturb-k}
are defined by \emph{trace killing} maps, and therefore they do not
determine the part in $\rho_k$ that is proportional to $\Id$.
However, since
$\Tr(\rho_0)=1$ and the full $\rho_\infty(\eps)$ must have unit
trace, then clearly $\Tr(\rho_k)=0$ for all $k>0$, which fixes the
$\Id$ contribution in each $\rho_k$. Therefore, we may define a new
super-operator $\mcK$
\begin{align}
\label{eq:perturb-k2}
  \mcK(\rho_k) \EqDef \rho_k - \mcE_0(\rho_k) +
   \Tr(\rho_k) \rho_0 
\end{align}
and replace \Eq{eq:perturb-k} with 
\begin{align*}
  \mcK(\rho_k) = \mcE_1(\rho_{k-1}) .
\end{align*}
The advantage of working with this equation is that, as we will show
shortly, $\mcK$ is an invertible map. Therefore, it gives us recursive
expression for $\rho_k$:
\begin{align}
\label{eq:recurse}
  \rho_k = (\mcK^{-1} \circ \mcE_1)(\rho_{k-1}) .
\end{align}

Next, consider our local observable $A$, whose expectation value at
the steady state can be written as 
\begin{align*}
  \av{A} = \Tr\big(A\rho_\infty(\eps)) = \Tr(A\rho_0) 
    + \eps\Tr(A\rho_1) + \ldots
\end{align*} We can use the
recursive identity \Eq{eq:recurse} and move the super-operators to
the Heisenberg picture, obtaining:
\begin{align}
\label{eq:perturb-Hk}
  \Tr(A \rho_k)
    = \Tr\big(A \cdot (\mcK^{-1}\circ\mcE_1)(\rho_{k-1})\big)
  = \Tr\big((\mcK^{-1} \circ \mcE_1)^*(A) \cdot \rho_{k-1}\big) , 
    \qquad k=1,2,\ldots
\end{align}
Therefore, defining the transition super-operator $\mcT \EqDef
\mcE_1^*\circ(\mcK^*)^{-1}$, we get the recursive equation
\begin{align*}
  \Tr(A \rho_k) = \Tr\big(\mcT(A)\rho_{k-1}\big),
\end{align*}
which lets us define a series of operators
$\{A=A_0, A_1, A_2, \ldots, A_k\}$ such that 
\begin{align}
\label{def:A_k}
  A_k =\mcT^k(A), \qquad \Tr(A_k \rho_0) = \Tr(A \rho_k) .
\end{align}
This proves \Eq{eq:avA}.

For the next part of the proof, we will find an upper-bound to
$\norm{A_k}_\infty$ by working in the dual Wauli basis, and see how
it transforms under $\mcT$. Firstly, from \eqref{def:mcE0} and
\eqref{def:mcE1}, we get
\begin{align}
\label{eq:adjE0}
  \mcE^*_0(A) &= \sum_{i\in V} q_i 
    \Tr(AW_i)\otimes \Id_i, \\
\label{eq:adjE1}
  \mcE_1^*(A) &= \sum_{e\in E} p_e \mcF_e^*(A) - \mcE_0^*(A).
\end{align}
Notice that $\{Q_\ualpha\}_\ualpha$ and $\{\tQ_\ualpha\}_\ualpha$
are the eigenbases of $\mcE_0$ and $\mcE_0^*$ respectively, with
eigenvalues: 
\begin{align}
  \mcE_0(Q_\ualpha) = \sum_{i\notin\supp(\ualpha)}q_i Q_\ualpha
  = (1-q_{\ualpha})Q_\ualpha,
  \qquad
  \mcE_0^*(\tQ_\ualpha) = (1-q_{\ualpha})\tQ_\ualpha,
\end{align}
where we defined $q_{\ualpha}\EqDef \sum_{i\in\supp(\ualpha)}q_i$
and used the fact that $\{q_\ualpha\}$ form a probability
distribution. Using the above equations, together with the
definition of $\mcK$ in \Eq{eq:perturb-k2} and the fact that
$\Tr(Q_\ualpha)=\delta_{\ualpha,\uzero}$ where
$\uzero=(0,\ldots,0)$, and that $\rho_0 = Q_\uzero$ (see
\Eq{eq:rho0-formula}), we conclude that 
\begin{align}
  \mcK(Q_\ualpha) = \begin{cases}
    q_\ualpha Q_\ualpha , & \ualpha \neq \uzero \\
    Q_\ualpha, & \ualpha = \uzero
  \end{cases} .
\end{align}
Consequently, $\mcK$ is diagonal in the Wauli basis, and as its
eigenvalues are non-zero, it is also invertible. A parallel argument
applies for its conjugate $\mcK^*$ with respect to the dual Wauli
basis, and therefore, also $\mcK^*$ is invertible and
\begin{align}
\label{eq:invadjK}
  (\mcK^*)^{-1}(\tQ_\ualpha) = \begin{cases}
    \frac{1}{q_\ualpha }\tQ_\ualpha , & \ualpha \neq \uzero \\
    \tQ_\ualpha, & \ualpha = \uzero
  \end{cases} .
\end{align}

Recalling that $\mcT = \mcE^*_1\circ (\mcK^*)^{-1}$, we can now
proceed to find how it acts on the dual Wauli basis.
This is done in the following lemma, proven in \App{sec:Q1-proofs}:
\begin{claim}
\label{clm:tau} \
  \begin{enumerate}  
    \item \label{clm:tau:1} Recall the definition of the
      $\mcS_k$ subspace in \Def{def:Sk}. Then for every $O\in
      \mcS_k$, we have $\mcT(O)\in \mcS_{k+1}$. Moreover, the
      support of $\mcT(O)$ is included in the ball of radius $1$
      around the support of~$O$.

    \item \label{clm:tau:2} For any observable $O$, it holds
      $\norm{\mcT(O)}_{Q_1} \le 130\cdot \norm{O}_{Q_1}$.
  \end{enumerate}
\end{claim}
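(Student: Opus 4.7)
The plan is to reduce both parts of the claim to an explicit computation of $\mcT$ on a single dual--Wauli basis element $\tQ_\ualpha$; linearity and the triangle inequality then deliver the full statement. Since $(\mcK^*)^{-1}$ is diagonal in $\{\tQ_\ualpha\}$ by \eqref{eq:invadjK}, it preserves both the support and the $\mcS_k$-membership of $\tQ_\ualpha$, and simply rescales by $1/q_\ualpha$ when $\ualpha\neq\uzero$ (while the $\ualpha=\uzero$ case is trivial because $\mcE_1^*(\Id)=0$ by unitality of $\mcF_e^*$ and $\mcD_i^*$). All the work therefore lies in analyzing $\mcE_1^*(\tQ_\ualpha)$.

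The key calculation splits the edge sum in $\mcE_1^*$ according to whether an edge meets $\supp(\ualpha)$. Let $E_\ualpha$ denote the set of edges with at least one endpoint in $\supp(\ualpha)$, and set $p_{E_\ualpha} \EqDef \sum_{e\in E_\ualpha} p_e$. For every $e\notin E_\ualpha$, the operator $\tQ_\ualpha$ restricts to the identity on both qubits of $e$, so unitality yields $\mcF_e^*(\tQ_\ualpha)=\tQ_\ualpha$. Combining this observation with $\mcE_0^*(\tQ_\ualpha)=(1-q_\ualpha)\tQ_\ualpha$ gives the clean decomposition
\begin{align*}
  \mcE_1^*(\tQ_\ualpha)
  = (q_\ualpha - p_{E_\ualpha})\tQ_\ualpha
    + \sum_{e\in E_\ualpha} p_e\, \mcF_e^*(\tQ_\ualpha),
\end{align*}
and consequently, for $\ualpha\neq \uzero$,
\begin{align*}
  \mcT(\tQ_\ualpha)
   = \frac{q_\ualpha-p_{E_\ualpha}}{q_\ualpha}\tQ_\ualpha
     + \frac{1}{q_\ualpha}\sum_{e\in E_\ualpha} p_e\, \mcF_e^*(\tQ_\ualpha).
\end{align*}

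Property (1) reads off immediately: for $e\in E_\ualpha$, the operator $\mcF_e^*(\tQ_\ualpha)$ is supported on $\supp(\ualpha)\cup e$, which contains at most $|\ualpha|+1$ qubits and sits inside $\mathrm{Ball}_1(\supp(\tQ_\ualpha))$ because $e$ has an endpoint in $\supp(\ualpha)$. For property (2), the triangle inequality and \Lem{lem:Q_1-growth} (with $k=2$, giving $\|\mcF_e^*(\tQ_\ualpha)\|_{Q_1}\le 2^{6}=64$) reduce matters to bounding $p_{E_\ualpha}/q_\ualpha$. Here the crucial estimate is $p_{E_\ualpha}\le 2q_\ualpha$, which follows directly from $q_i=\tfrac12\sum_{e\ni i}p_e$: summing over $i\in\supp(\ualpha)$ counts each edge of $E_\ualpha$ at most twice and with weight $\tfrac12 p_e$ each time. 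Inserting this gives
\begin{align*}
  \norm{\mcT(\tQ_\ualpha)}_{Q_1}
  \le \frac{p_{E_\ualpha}-q_\ualpha}{q_\ualpha}
    + \frac{64\, p_{E_\ualpha}}{q_\ualpha}
  \le 1 + 128 = 129 \le 130,
\end{align*}
and linearity lifts this uniform bound to arbitrary $O$.

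The one genuine obstacle is the apparent blow-up coming from $1/q_\ualpha$, which can be arbitrarily small when the probability $\{p_e\}$ is spread over many edges. The point of the edge-splitting above is exactly to expose the cancellation that prevents this: the coefficient multiplying $1/q_\ualpha$ in the expression for $\mcT(\tQ_\ualpha)$ is controlled by $p_{E_\ualpha}$, and the combinatorial identity $p_{E_\ualpha}\le 2q_\ualpha$ ensures the ratio stays bounded by a universal constant, independent of the graph, of the distribution $\{p_e\}$, and of $\ualpha$.
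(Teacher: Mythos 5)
Your proof is correct and follows essentially the same route as the paper's: the same split of the edge sum into edges meeting and missing $\supp(\ualpha)$, the same use of \Lem{lem:Q_1-growth} to get the factor $64$ for a $2$-local $\mcF_e^*$, and the same key estimate $p_{E_\ualpha}\le 2q_\ualpha$ yielding the constant $129\le 130$. One small wording slip: to obtain $p_{E_\ualpha}\le 2q_\ualpha$ you need that each edge of $E_\ualpha$ is counted \emph{at least once} in $\sum_{i\in\supp(\ualpha)}\sum_{e\ni i}p_e$, whereas your ``at most twice'' counting instead gives the reverse bound $q_\ualpha\le p_{E_\ualpha}$ --- which, incidentally, is exactly what legitimizes writing the coefficient of $\tQ_\ualpha$ as $(p_{E_\ualpha}-q_\ualpha)/q_\ualpha$ without an absolute value.
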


Let us use the claim to establish the upper bound on
$\norm{A_k}_\infty$.  As $A_k = \mcT^k(A)$, then by 
Bullet~\ref{clm:tau:2}, we get:
\begin{align}
  \label{eq:kTimesQnorm}
  \norm{A_k}_{Q_1}\le 130 \norm{A_{k-1}}_{Q_1}
  \le \ldots \le 130^k \norm{A}_{Q_1} .
\end{align}
Next, we use \Lem{lem:Q_1} to transform the above
inequality into a similar inequality in the operator norm. By
assumption, $A$ is supported on $\ell$ qubits, and therefore by
Bullet~\ref{clm:tau:1} of Claim~\ref{clm:tau}, $A_k=\mcT^k(A) \in
\mcS_{k+\ell}$. Using inequality~\eqref{eq:Qinfineq-1} from
\Lem{lem:Q_1}, we get
\begin{align}
\label{eq:equivInfQ}
  \norm{A_k}_\infty \le (1+\lambda)^{k+\ell}\cdot \norm{A_k}_{Q_1}.
\end{align}
On the other hand, by inequality~\eqref{eq:Qinfineq-2} of the same
lemma, combined with the fact that $|\supp(A)|=\ell$, we get
\begin{align}
  \label{eq:equivQinf}
   \norm{A}_{Q_1} \leq 4^\ell \cdot \norm{A}_\infty .
\end{align}
Altogether, inequalities (\ref{eq:equivInfQ},~\ref{eq:equivQinf})
imply that
\begin{align}
  \norm{A_k}_\infty \le \big(130(1+\lambda)\big)^k
    \cdot\big(4(1+\lambda)\big)^\ell\cdot \norm{A}_\infty 
  \le e^{6(k+\ell)}\cdot \norm{A}_\infty .
\end{align}

To conclude the proof, let us estimate $|\supp(A_k)|$ and the
complexity of calculating $A_k$. To do that, consider the how
$\mcT(\tQ_\ualpha)$ changes the support of $\tQ_\ualpha$. By
\Eq{eq:invadjK} and the definition of $\mcT$, we know that
\begin{align*}
  \mcT(\tQ_\ualpha) = \mcE_1^*\circ(\mcK^*)^{-1}(\tQ_\ualpha)
    \propto \mcE_1^*(\tQ_\ualpha).
\end{align*}
From the expression of $\mcE_1^*$ in \Eq{eq:adjE1}, we see that as
$\mcF_e^*(\Id) = \mcE_0^*(\Id) = \Id$ (which is the completeness
property of quantum channels), then the only correlators $\mcF_e^*$
acting non-trivially on $\tQ_\ualpha$ correspond to edges that
intersect with $\supp(\ualpha)$. Therefore, we can write
\begin{align*}
  \mcE_1^*(\tQ_\ualpha) = \sum_\ubeta c_\ubeta \tQ_\ubeta, 
\end{align*}
where $c_\ubeta\neq 0$ only when $\supp(\tQ_\ubeta)$ can be obtained
from $\supp(\tQ_\ualpha)$ by adding a neighboring qubit (with
respect to $G$) to one of the qubits in $\supp(\tQ_\ualpha)$.
Geometrically, this can be viewed as if under $\mcE_1^*$ (and
therefore also under $\mcT$), the support of $\tQ_\ualpha$ spreads
by one step over the graph to every direction. This implies that
$\supp(A_k)$ is included in the union of all $k$-walks over $G$ that
started from the support of $A=A_0$. When the underlying graph $G$
represents a $D$-dimensional lattice, then $\supp(A_k)$ grows like
$O(\ell \cdot k^D)$.

To estimate the complexity of calculating $A_k$, we perform the
calculation in the dual Wauli basis. The number of basis elements 
needed to represent $A_k$ is $\orderof{4^{|\supp(A_k)|}}$, and
therefore since $A_k = \mcT(A_{k-1})$, the calculation of $A_k$ can
be reduced to the a matrix-vector multiplication over a linear space
of dimension $q=4^{|\supp(A_k)|}$, which can be done in time
$O(q^2)= O(4^{2|\supp(A_k)|}) = e^{O(|\supp(A_k)|)}$. This is also
an upper bound for the first $k-1$ steps of the calculation, and
therefore the total running time needed to calculate $A_k$ is
$e^{O(|\supp(A_k)|)}$. 

\subsection{Proof of \Thm{thm:decay-of-corr}}

The idea behind the proof is as follows. We expand the covariance
function as a power series in $\eps$, and show that the lowest
non-vanishing contribution comes from terms of degree larger than
the graph distance between the supports of $A$ and $B$. \Thm{thm:avA}
will guarantee that these high order terms decay exponentially. To
show the vanishing of the low-order terms, we move the multivariate
settings used in the proof of \Thm{thm:quasi-local}, where we assign
different $\eps_j$ to different edge interactions. Expanding the
covariance in powers of $\{\eps_j\}_j$, it follows that
non-vanishing terms arise from sequences $\eps$'s that connect $A$
to $B$. This shows that non-zero contribution will come from
monomials of degree which is larger than the distance between $A$
and $B$. These, however, drop exponentially in $\eps/\eps_0$ due to
bullet 1 of \Thm{thm:avA}.

Let us begin by considering the channel from \Thm{thm:avA}, namely
$\mcE_\eps =\sum_{e\in E}p_e\mcE_\eps^{(e)}$, and assign
different variable $\eps_e$ to each edge $\mcE_e$.
The modified channel becomes $\sum_{e\in
E}p_e\mcE_{\eps_e}^{(e)}$ as presented in \Eq{eq:ch}, whose unique
steady state $\rho_\infty(\ueps)$ (given in \Eq{eq:rho-ueps}) is
analytic in $\ueps$, as claimed in \Lem{lem:analyticality}. As a
result, the steady state covariance is analytic,
which allows us to expand it in the following manner:
\begin{align}
    C(A,B)_{\rho_\infty(\ueps)} = C_0 
      + \sum_{\ue_1} \ueps^{\ue_1} C^{(\ue_1)}
      + \sum_{\ue_2} \ueps^{\ue_2} C^{(\ue_2)} + \ldots
\end{align}
Above, we used the same notation
$\ue_k=\{e_1,\ldots,e_k\}$,
$\ueps^{\ue_k}=\eps_{e_1}\dots\eps_{e_k}$ that we used in
\Eq{eq:rho-ueps}, and $\{C^{(\ue_k)}\}$ are
scalar coefficients. Note that $C_0=0$, because when setting
$\eps=0$, the steady state becomes a product state and the
covariance vanishes.

We start by showing that the expansion consists only of terms that
connect $A$ to $B$. Fix $\ue_k=\{ e_1,\ldots,e_k\}$, and assume it
does not contain a path from $A$ to $B$. As in
\Sec{sec:quasi-local-proof}, define $\tilde G$ to be the
$\ue_k$-induced subgraph, which is achieved from $G$ by restricting
to the edges in $\ue_k$. Note that due to the assumption on $\ue_k$,
each of the disjoint connected components of $\tilde G$ intersect
with $A$ or $B$ or none of them, but not $A$ and $B$ simultaneously.
Next, as done in the proof of \Thm{thm:quasi-local}, we denote by
$\tilde{\ueps}$ the set of parameters obtained from $\ueps$ after
setting $\tilde{\eps}_e=0$ for all $e\notin \ue_k$, and consider the
reduced channel $\mcE_{\tilde \ueps}= \sum_{\ell=1}^L P_\ell
\mcE_\ell$, which is a convex sum of disjoint channels on the
components of $\tilde G$ (see \Eq{eq:sep-channel}). Then corresponding steady state $\rho_{\infty}(\tilde \ueps)$
decomposes to a tensor product of states,
\begin{align}
  \rho_{\infty}(\tilde \ueps) 
    = \bigotimes_{\ell=1}^L \sigma_\ell (\tilde \ueps_\ell) 
    = \sigma_{G_A} \otimes \sigma_{G_B} \otimes \sigma_{other},
\end{align}
where $G_A$ ($G_B$) is the union of components of $\tilde G$ that
intersect with $A$ ($B$).
This is because
$\ue_k$ does not contain a path from $A$ to $B$, $G_A$ and $G_B$ are
disjoint, which guarantees that $\rho_{\infty}(\tilde \ueps) $ is
indeed a product state. Moreover, the $\rho_{\infty}(\tilde
\ueps)$-covariance function satisfies
\begin{align}
    C(O_A,O_B)_{\rho_{\infty}(\tilde \ueps) }
    = \Tr[O_A O_B \cdot  \sigma_{G_A} 
      \otimes \sigma_{G_B} ] 
     - \Tr[O_A \sigma_{G_A}]\cdot\Tr[O_B \sigma_{G_B}] = 0 .
\end{align}
Expanding the expression above in $\tilde \ueps$, its vanishing,
together with analyticity implies that each monomial vanishes
individually, and in particular, $C^{(\ue_k)}=0$.

We conclude that the expansion $C(\ueps)=\sum_\ue \eps^{\ue}
C^{(\ue)}$ contains only terms that connect $A$ and $B$. Such terms
must be of degree which is greater than or equal to the distance
$d_{AB}$ between the supports of $A$ and $B$.

We now return to the original settings of the problem
by setting $\eps_1 = \eps_2=\dots=\eps$
and expand the covariance as a power
series in $\eps$:
\begin{align} \label{eq:C-expansion}
  C(A,B)_{\rho_{\infty}(\eps)} 
    = \av{A B}_{\rho_\infty(\eps)} 
      - \av{A}_{\rho_\infty(\eps)}\cdot \av{B}_{\rho_\infty(\eps)}
    = \sum_{k\ge d_{AB}} \eps^k C^{(k)} .
\end{align}
Expanding $\av{A B}$, $\av{A}$ and $\av{B}$ according to
\Eq{eq:avA} of \Thm{thm:avA}, we get the following
expression for $C^{(k)}$:
\begin{align} \label{eq:Ck}
    C^{(k)} = \Tr(\rho_0 (A B)_k) -\sum_{j=0}^k \Tr(\rho_0 A_j) \cdot \Tr(\rho_0 B_{k-j}) .
\end{align}
Using triangle inequality and plugging Bullet 1 of \Thm{thm:avA} to
\Eq{eq:Ck}, we achieve the following upper bound 
\begin{align*}
  |C^{(k)}| &\le |\Tr(\rho_0 (A B)_k)| 
    + \sum_{j=0}^k |\Tr(\rho_0 A_j)| 
      \cdot |\Tr(\rho_0 B_{k-j})| \\
    &\le e^{6(k+\ell_A+\ell_B)}\norm{A\otimes B} 
      + \sum_{j=0}^k e^{6(j+\ell_A)}\norm{A} 
          \cdot  e^{6(k-j+\ell_B)}\norm{B} \\
    &\le e^{6(k+\ell_A+\ell_B)}\norm{A}\cdot \norm{B} 
      +(k+1) e^{6(k+\ell_A+\ell_B)}\norm{A}\cdot\norm{B} \\ 
    &= (k+2) e^{6(k+\ell_A+\ell_B)}\norm{A}\cdot \norm{B} .
\end{align*}
Here, $\ell_A$ and $\ell_B$ denote the number of qubits in the support of $A$ and $B$, respectively.
Plugging this to the expansion \eqref{eq:C-expansion} produces the
desired decay
\begin{align*}
  |C(A,B)_{\rho_\infty(\eps)}| 
    \le \sum_{k\ge d_{AB}}  \eps^k |C^{(k)}|
  \le \norm{A}\cdot \norm{B}e^{6(\ell_A+\ell_B)}
    \sum_{k\ge d_{AB}} (k+2)\left(\frac{\eps}{e^{-6}}\right)^k .
\end{align*}
Computing the geometric sum $\sum_{k\ge d} (k+2) q^k =
q^d\frac{(d+2)+\frac{q}{1-q}}{1-q} < q^d\frac{d+2}{(1-q)^2}$ gives
\Eq{eq:DOC}.

\section{Numerical findings}
\label{sec:numerics}

In this section, we present our numerical results of computer
simulations on small systems. We considered channels in the
restricted form stated in \Thm{thm:avA} on a 1D-ring of $n=10$
qubits:
\begin{align}
  \mcE_{\eps}(\rho) \EqDef \frac{\eps}{10}\sum_{i=0}^9
      \mcF_{i,i+1}(\rho) 
    + \frac{1-\eps}{10} \sum_{i=0}^{9} \Tr_i(\rho)\otimes W_i ,
\end{align}
where we identified $i=10$ with $i=0$ for periodic boundary
conditions. The main purpose of these simulations was to examine the
decay of $\norm{H_k}$ (or, more precisely, of $\eps^k \norm{H_k}$)
as a function of $k$ for typical channels. 

In all the simulations, we initialized our qubits to the state
$\rho_0\EqDef \ketbra{0}{0}^{\otimes n}$ and repeatedly applied the
channel until it approximately converged to the steady state. We
denoted by $\rho_t$ the state of the system at step $t$, and 
assumed convergence to the steady state $\rho_{\infty}(\eps)$ when
$\norm{\rho_t-\mcE(\rho_t)}_1 \le 10^{-8}$.  Once we reached the
steady state, we calculated $H_G(\eps) =
-\log{\rho_{\infty}(\eps)}$, the Gibbs Hamiltonian of our channel. 

Having calculated $H_G(\eps)$ for various $\eps\in[0,1]$, our task
was to estimate the norm of the $H_k$ terms in statement of
\Thm{thm:quasi-local}. This, however, turned out to be a numerically
challenging task, as one needs to perform some sort of an
interpolation in $\eps$ in order to extract the $H_k$ from
$H_G(\eps)$. Such procedures usually become numerically unstable as
$k$ increases. Instead, we circumvented this difficulty by
estimating a closely related quantity. Given $H_G(\eps)$, we
expanded it in terms of Pauli strings, $H_G(\eps) = \sum_\ualpha
c_\ualpha(\eps) P_\ualpha$, and defined $\hat{H}_k(\eps)$ by
\begin{align}
\label{eq:hHat}
  \hat{H}_k(\eps) \EqDef \frac{\sum_{d(\ualpha) = k} 
    c_\ualpha(\eps) P_\ualpha}{\norm{\sum_{d(\ualpha) = 1} 
      c_\ualpha(\eps) P_\ualpha}_{\infty}}.
\end{align}
In the above formula, $d(\ualpha)$ denotes the \emph{diameter} of
the support of $\ualpha$. As we worked on a ring, it was calculated
as follows. Given a $\ualpha=(\alpha_0, \ldots, \alpha_{n-1})$, we
listed all trivial segments in the string, i.e., contiguous
locations where $\alpha_i=0$, and defined $d(\ualpha)$ to be $n=10$
minus the length of the longest segment. The denominator in the
above definition is an overall factor that guarantees that
$\norm{\hat{H}_{k=1}(\eps)}_\infty=1$. 

We note that $\eps^k H_k \neq \hat{H}_k(\eps)$. The main difference,
apart from the normalization factor and the fact $\hat{H}_k(\eps)$
is geometrically $k$-local, while $H_k$ is geometrically
$(k+1$)-local, is that $H_k$ includes only terms that are of
geometrical support of \textit{at most} $k+1$ sites whereas $\hat
H_k$ includes terms that are of geometrical support of
\textit{exactly} $k$ sites. Nevertheless, we expect that for general
systems, the \emph{scaling} of these two quantities should be the
same. In what follows we plotted $\norm{\hat{H}_k(\eps)}_\infty$ vs
$k\in \{1,\ldots, 7\}$, for different values of $\eps$ and different
models. We stopped at $k=7$ instead of $k=n=10$, because already at
$k=8$, the finite system size started to be felt.

The first dynamics we simulated was composed of random $W_i$ and
random correlators $\{\mcF_{e=(i,i+1)}\}$. The former where chosen
randomly on the Bloch sphere, while the latter were chosen as random
Kraus operators of $\mathrm{rank}=3$ (see
\cRef{ref:prosen2000random} for more details). To validate the
robustness of our results, we ran several random instances of such
channels, and observed similar results in all runs. We plot
$\norm{\hat{H}_k(\eps)}_\infty$ as function of $k$ for a single
instance of such channel in \Fig{fig:Kraus}.
\begin{figure}[ht]
\centering
  \begin{subfigure}[l]{0.48\textwidth}
      \centering
      \includegraphics[width=\textwidth]{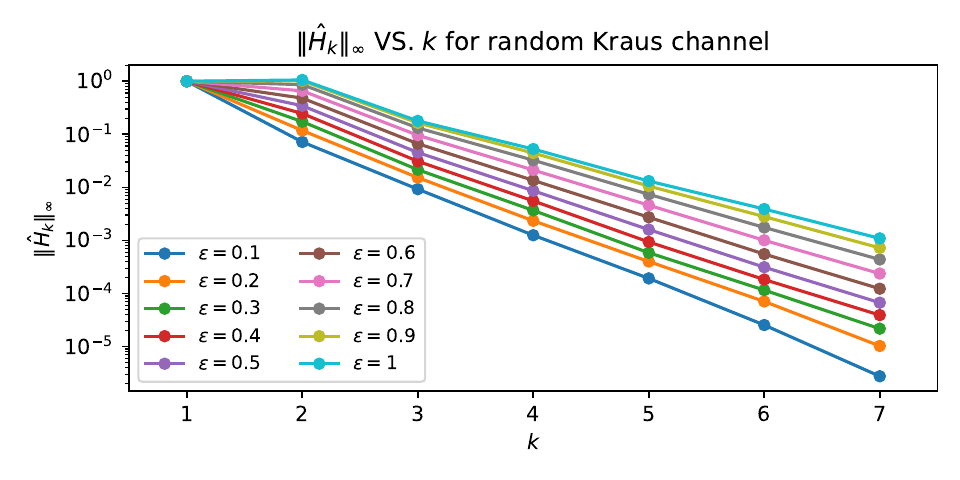}
      \caption{First model: random Kraus maps}
      \label{fig:Kraus}
  \end{subfigure}
  \hfill
  \begin{subfigure}[r]{0.48\textwidth}
      \centering
      \includegraphics[width=\textwidth]{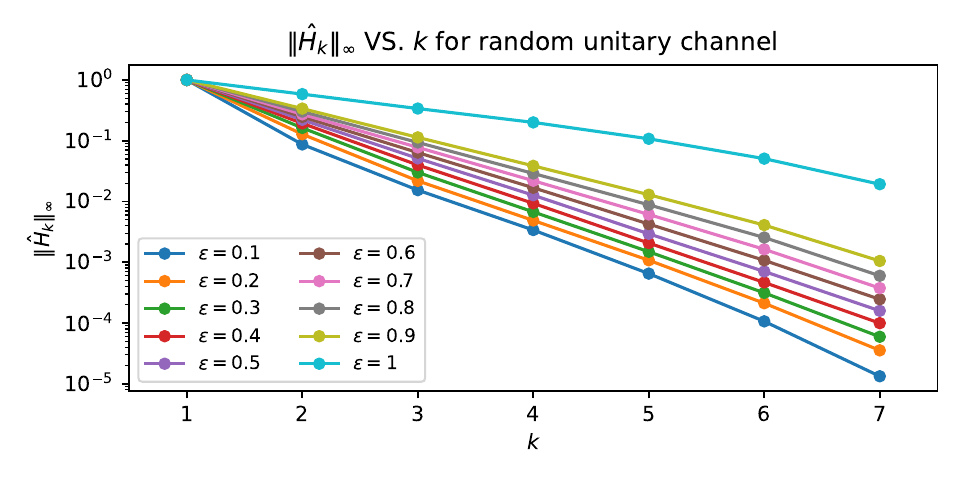}
      \caption{Second model: random unitaries ($k=1)$}
      \label{fig:U}
  \end{subfigure}
  \hfill
  \begin{subfigure}[l]{0.48\textwidth}
      \centering
      \includegraphics[width=\textwidth]{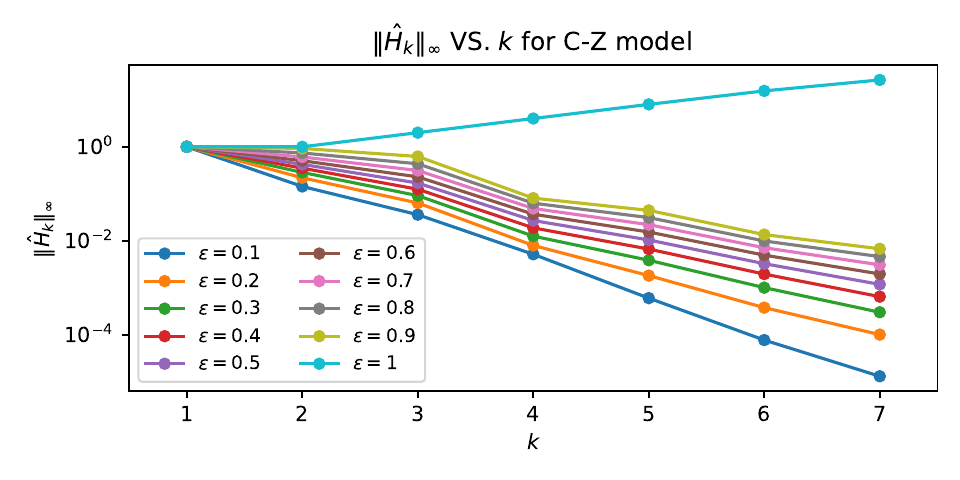}
      \caption{Third model: channel from \Eq{eq:Ch-CZ}}
      \label{fig:QC}
  \end{subfigure}
       \hfill
  \begin{subfigure}[r]{0.48\textwidth}
      \centering
      \includegraphics[width=\textwidth]{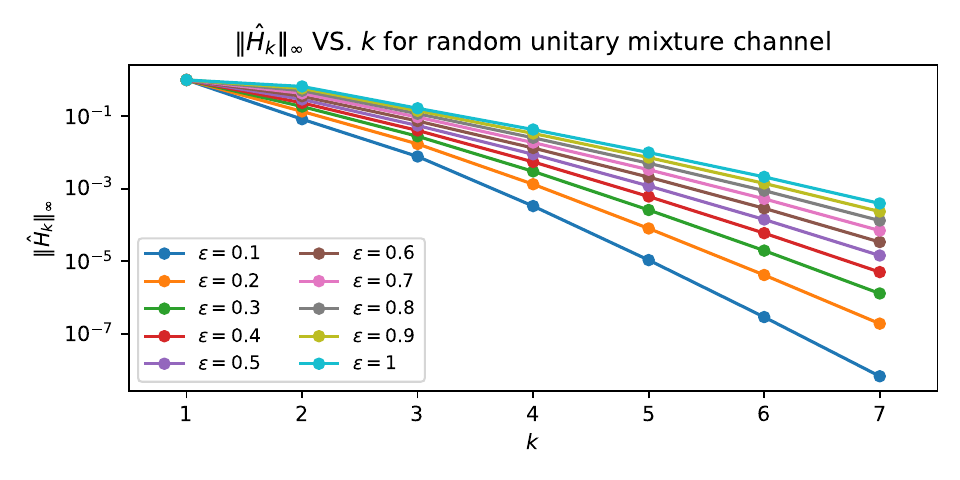}
      \caption{Second model: random unitary ($k=2)$}
      \label{fig:U2}
  \end{subfigure}
  \caption{Operator norms of $\hat H_k$ (see \Eq{eq:hHat}) as
    function of $k$ are displayed for simulations of four choices
    of $\mcE_\eps$ at $\eps = 0.1,0.2,\ldots,1$ on $10$ qubits. }
\end{figure}
The evident exponential decay of the norms for each
$\eps=0.1,0.2,\ldots,1$ suggests that $H_G(\epsilon)$ is indeed
quasi-local. The decay rates decrease (moderately) as $\epsilon$ is
increased. We note that there is a very little difference between
the decay at $\eps=1$ and the decay rates at $\eps<1$. We believe
this is due to the fact that the random correlators create a 
channel that is still dissipative, local, and ergodic, even when the
dissipators are absent. 

The second channel we simulated is identical to the first, except
that the correlators $\mcF_e$ taken to be random unitary channels:
$\mcF_e(\rho) = \frac{1}{k}\sum_{a=1}^k U^{(e)}_a\rho
U^{(e)\dagger}_a$, where $U^{(e)}_a$ are random unitaries taken from
the Haar measure on two qubits, and $k=1,2$. As in the previous
case, we verified robustness by running several such channels, which
all yielded similar results. Results of one instance with $k=1$ are
plotted in \Fig{fig:U}. The decays we see in \Fig{fig:U} resemble
the decays seen in \Fig{fig:Kraus}, but there is one notable
difference: Unlike in \Fig{fig:Kraus}, there is a significant
difference in the decay rate obtained from $\eps = 0.9$ and from
$\eps = 1$, where we see a substantial decrease in the decay rate
for $\eps=1$. We conjecture that in the absence of dissipators, when
there is a single unitary on each edge, ergodicity is
weak\footnote{We suspect that for $k=1$, the number of products of
unitaries required to achieve the full operator algebra is large.
(see Wielandt's inequality\cc{ref:Sanz2010primitivity,
ref:Angela2024genericquantum}). Therefore, the channel might have
eigenvalues that approach $1$, creating long range correlations
within the steady state.}, which might lead to a slow decay in 
$\norm{\hat H_k(\eps=1)}$. We confirm this by performing the same
simulation with $k=2$, i.e., a mixture of two unitaries on each
edge, which is likely to have stronger ergodicity. The results of a
single instance are presented in \Fig{fig:U2}. Note that a similar
decay profile is found in $\eps=1$ and $\eps=0.9$.

The third channel we studied the specific channel that was given in
\Eq{eq:Ch-CZ} in \Sec{sec:statement}. This is a
translationally-invariant channel, in which all the $W_i$ are given
by $W_i=0.9\ketbra{+}{+} + 0.1\ketbra{-}{-}$ and the correlators are
simply control-Z gates. The results of this simulation are plotted
in \Fig{fig:QC}. Here, at $\eps = 1$ we see the channel leads to
long-range Gibbs Hamiltonian $H_G$, which is not at all quasi-local.
In this case the channel reduces to $\frac{1}{10}\sum_i
\CZ_{i,i+1}\cdot\rho\cdot \CZ_{i,i+1}$ for which the steady state is
not unique (in-fact, any $\rho$ that is diagonal in the
computational basis is a fixed-point).  For the remaining choices of
$\eps$ values, one can see an exponential decay in agreement, which
is what we expect. It would be interesting to study the long-range
to short-range transition in $\eps$ for different choices of gates.
We leave this for future research. 

Overall, in our simulations we see that whenever a channel is of the
form given in \Thm{thm:quasi-local}, we obtain a quasi-local
Hamiltonian. In the limit $\eps = 1$, the channel may or may not
converge to a steady-state with quasi-local Gibbs Hamiltonian.

\section{Discussion and future research}
\label{sec:discussion}

In this paper, we established rigorous statements on the steady
states of a broad family of local stochastic dynamics,
defined as a convex sum of local channels~\eqref{def:E}. These are
channels that are composed of convex combination of single site
channels (named dissipators), and two-local channels (correlators),
for which the weight of the correlators is given by a parameter
$\eps\in[0,1)$. Our first main result is given in
\Thm{thm:quasi-local}, where we show that the steady state can be
written as a Gibbs state with a Gibbs Hamiltonian $H_G$ that admits
a series expansion in $\eps$. We proved that the $k$'th order in
$H_G$ is a $(k+1)$-geometrically local Hamiltonian $H_k$.  While we
were not able to derive an explicit upper bound on the norm of each
$H_k$, we used numerical simulations of several exemplary channels
of small systems, to show that the norm of $\eps^k H_k$ decays
exponentially in $k$ for all $\eps\in [0,1)$.

Although the statement of \Thm{thm:quasi-local} is intuitive, as the
steady state is reached by local dynamics, it is not a trivial
statement, as even in the classical regime local dynamical rule
might converge to a long range Gibbs
Hamiltonian\cc{ref:Kafri1998non-local}. The proof of the theorem 
follows an idea also used in \cRef{ref:Arad20131DAL} of moving to a
multivariate expansion, which gives rise to a short combinatorial
proof. We note that our proof is given in the language of quantum
channels, but its generalization to the more general setup of local
Lindbladians is straight-forward.

Below, we give several open questions and possible research
direction related to \Thm{thm:quasi-local}
\begin{enumerate}

  \item \textbf{Bounding the norm of $H_k$}. Arguably, the biggest
    missing piece in \Thm{thm:quasi-local} is a bound on the norm of
    $H_k$ terms. Showing that $\norm{H_k}\le \mu^k$ for some
    $\mu>0$, will rigorously imply that the Gibbs Hamiltonian is
    quasi-local for $\eps<1/\mu$. While we do not have a proof for
    that, we suspect that this is the case. We believe that the
    techniques used in the proof of \Thm{thm:avA}, where a similar
    threshold was established, can be used to prove a statement like
    that. We leave that for future research.

  \item \textbf{Generalization to general $2$-local channels:}
    Setting $\eps=1$ in~\eqref{def:E} leads to a general 2-local
	stochastic dynamics, which is not expected to converge to a
	quasi-local Gibbs state in all cases.  Nevertheless, it is often
	observed in the numerics that this is indeed the case. A
	necessary condition for this to happen is that the channel needs
	to be ergodic, as shown in \cRef{ref:Kafri1998non-local}, as
	well as in the numerical data in \Fig{fig:QC}.  One possible
	explanation for the convergence of general $2$-local channels is
	that in these cases, the general $2$-local channels can be
	self-decomposed into a sum of 1-local and 2-local channels. This
	way, an effective small parameter $\eps$ can be extracted from
	the dynamics itself.

  \item \textbf{Engineering Hamiltonians and Gibbs sampling:}
    An interesting future research direction is to understand the
    functional dependence of the quasi-quantum Gibbs Hamiltonian on
    the underlying stochastic dynamics. While \Thm{thm:quasi-local}
    gives us the general shape of the interaction graph of $H_G$, it
    does not directly provide the actual form of $H_k$.
    Understanding this dependence, we might be able to
    \emph{engineer} the local dynamics to converge to a specific,
    strictly local, Gibbs Hamiltonian (or an approximation of it).
    This would give rise to a natural family of quantum Gibbs
    preparation algorithms, which can be implemented on a quantum
    computer. Canceling high order terms in $H_G$ will probably
    require adding less local terms to stochastic dynamics. We
    conjecture that using quasi-local dynamics, one can reach a
    steady state that is a good approximation to a Gibbs state of
    strictly local Hamiltonian. This picture agrees with the recent
    Gibbs state preparation algorithms, that simulate a Lindbladian
    evolution with quasi-local generators\cc{ref:chen2023efficient,
    ref:guo2024designing}. 
    
    Additionally, if a large family of local quantum maps give rise
	to a manifold of Gibbs states, then one might use a
	\emph{variational} approach to find the local map that yields a
	particular Gibbs state. This approach might be particularly
	suitable for NISQ-type algorithms on noisy hardware, since some
	of the noise might be mitigated by the variational
	procedure\cc{ref:wang2021variational,
	ref:Consiglio2024variational, ref:ilin2024dissipative}.
\end{enumerate}

Our second main result is given in Theorem~\ref{thm:avA}, where we
computed the expectation values for local observables within the
steady state. For this purpose, instead of expanding the steady
state in $\eps$ (Schrodinger picture), we used the expansion of the
observable (Heisenberg picture). We showed that, just as for the
Gibbs Hamiltonian of the steady state, the expansion of the
observable in $\eps$ has a support that is growing with the order.
Moreover, in such case, we were able to give an upper-bound for the
norm of each term in the expansion. In other words, we found a
positive radius of convergence for the power series of
$A(\eps)=\sum_k A_k \eps^k$, which is an interesting result by
itself\cc{ref:Curt2024phases}. As a corollary, we estimated the
complexity of computing such expectation value \ within the radius
of convergence, which is summarized in Corollary~\ref{cor:alg}.

Following, are two remarks on the theorem:
\begin{enumerate}
  \item \textbf{Strengthening the results:}
    The assumption in Theorem~\ref{thm:avA} on
    the form of the dissipators $\mcD_i$ was taken for the sake of
    simplifying the result.
    This assumption can be relaxed to any ergodic dissipators which
    are diagonalizable, and for which the eigenoperators do not have
    large overlaps. Under such assumptions, the achieved bound will
    depend on the gap of the dissipators and the overlap
    between the eigenoperators. Moreover, we believe that
    the upper-bound on $\norm{A_k}$ can be strengthened by employing
    different techniques other than the $Q_1$ norm in the Wauli
    basis, which introduces an undesirable exponential dependence on the
    support of the observable $A$. Such
    strengthening will turn \Thm{thm:decay-of-corr} into a genuine
    `decay of correlations' statement.

  \item \textbf{Complexity thresholds:} 
    By the results of \cRef{ref:Verstrate2009DSE}, in the $\eps=1$
    limit, the steady state can be the ground state of any local
    frustration-free Hamiltonian, and therefore estimating
    expectation values there up to an error of $1/\poly(n)$ is a
    $\mathrm{QMA}_1$-hard problem. On the other hand, by
    \Thm{thm:avA} and Corollary~\ref{cor:alg}, for $\eps<\eps_0=1/e^6$,
    this problem is almost in P as it is quasi-polynomial,
    and therefore not expected to be QMA-hard (or even NP-hard).
    This implies the existence of a complexity crossover threshold
    somewhere in the range $[\eps_0, 1]$ and raises several
    questions: where exactly does this threshold lie? in particular,
    is it strictly smaller than $1$? if the answer is yes, then this
    might imply an interesting type of noise-resilience for such
    systems. It is also natural to ask if (and how) the threshold
    depends on underlying graph $G$. The fact that the bound
    $\eps_0=1/e^6$ in \Thm{thm:avA} is independent of $G$ is
    somewhat surprising, if we compare it other bounds of the same
    flavour, such as the Lieb-Robinson bound\cc{ref:Nechtergaele2009LR,ref:Kliesch2014LR}.
    Another interesting question is whether there exists a
    BQP-hard threshold between $\eps_0$ and the $\textrm{QMA}_1$
    threshold? A related question is how the mixing time of the
    system depends on $\eps$. Is there an $\tilde{\eps}\in (\eps_0,
    1)$ such that for $\eps<\tilde{\eps}$ the mixing time is
    polynomial in the system size, and therefore the problem of
    approximating local expectations in the steady state is in BQP?
    Finally, it would be interesting to understand how do all these different complexity regimes relate to
    the structural properties of $\rho_\infty$, e.g., correlation
    length, entanglement, efficient representability by a
    tensor-network, etc?

  \item \textbf{Finding a better algorithm for approximating 
    $\av{A}$}: A better algorithm than in Corollary~\ref{cor:alg}
    might be devised, reducing the quasi-polynomial time complexity
    to a polynomial time complexity, as we explain in the following.
    The quasi-polynomial time complexity in Corollary~\ref{cor:alg}
    is caused by the growth in support of $A_k$, which is contained
    in the $k$-radius ball around $\supp(A)$. Since we consider
    graphs that sit on a $D$-dimensional lattice, the size of such
    ball is proportional to $O(k^D)$, which leads to the $1/\delta$
    quasi-polynomial running time $e^{O(k^D)}$ when
    $k=\log(1/\delta)$. However, it can be seen from the proof of
    the theorem that $A_k$ is in fact a sum over $O(k)$-local Pauli
    (or Wauli) strings. Instead of summing all these strings, it
    might be possible to devise some sort of an
    importance-sampling algorithm for these strings (e.g., using
    ideas from \cRef{ref:Aharonov2023-noisy-sim}), thereby
    approximating the expectation value within the steady state. We
    leave this for future research.
    
\end{enumerate}

\section{Acknowledgments} 

We thank Yosi Avron, Yariv Kafri, Howard Nuer and Dorit Aharonov for
fruitful discussions.

This research project was supported by the National Research
Foundation, Singapore and A*STAR under its CQT Bridging Grant. 
It is also supported by the Israel Science Foundation (ISF)
under the Individual Research Grant No. 1778/17 of I.A.
and the joint Israel-Singapore NRF-ISF Research Grant No.
3528/20.
We acknowledge the support of the Helen Diller Quantum Center.

\appendix

\section{Proof of \Lem{lem:analyticality}}
\label{sec:analyticity-proof}

To prove \Lem{lem:analyticality}, we use the following fact on
quantum channels:
\begin{fact} [Ergodicity - proposition 3 of 
  \cRef{ref:Sanz2010primitivity}] \label{fact:ergodicity} A quantum
  channel $\mcE$ given in the Kraus form $\mcE(\rho)=\sum_{k=1}^m
  F_k \rho F_k^\dagger$ has a unique, full rank steady state
  $\sigma$ and no other eigenvalues for which $|\lambda|=1$ if and
  only if there exists $k\in \mathbb N$ such that the following
  vector space of operators
  \begin{align}
    S_k(\mcE) \EqDef \Span \{ F_{i_k}\cdot\ldots\cdot F_{i_1} 
      \;;\; i_1,\ldots,i_k \in \{ 1,\ldots ,m\}\}
  \end{align}
  spans the entire Hilbert space of operators $L(\mcH)$.
\end{fact}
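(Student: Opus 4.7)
The result is a quantum analog of Wielandt's theorem for primitive nonnegative matrices, and the cleanest route is through the Choi matrix combined with standard Perron-Frobenius arguments for completely positive maps. Both directions split naturally, so I would handle them separately.

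For the ``$\Leftarrow$'' direction, the key step is to show that the span condition $S_k(\mcE)=L(\mcH)$ implies that the iterate $\mcE^k$ is \emph{strictly positive}, i.e.\ $\mcE^k(\ketbra{w}{w})$ is full rank for every unit vector $\ket{w}$. Writing the Kraus operators of $\mcE^k$ as $G_{\vec i}=F_{i_k}\cdots F_{i_1}$, the support of $\mcE^k(\ketbra{w}{w})$ equals $\Span\{G_{\vec i}\ket{w}\}$. If any $\ket{u}$ were orthogonal to this span, the linear functional $G\mapsto\bra{u}G\ket{w}$ on $L(\mcH)$ would vanish on the spanning set $S_k(\mcE)=L(\mcH)$, contradicting its nonzero value at $G=\ketbra{u}{w}$. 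Strict positivity of $\mcE^k$ then implies by quantum Perron-Frobenius (Birkhoff's strict contraction in Hilbert's projective metric) that $\mcE^k$ has a unique full-rank fixed point $\sigma$ and no other peripheral eigenvalues. To lift this to $\mcE$ itself, any eigenvalue $\lambda$ of $\mcE$ with $|\lambda|=1$ satisfies $\mcE^k(X)=\lambda^k X$, so $\lambda^k=1$ and $X\propto\sigma$; substituting back into $\mcE(X)=\lambda X$ and using $\mcE(\sigma)=\sigma$ forces $\lambda=1$.

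For the ``$\Rightarrow$'' direction, I would use the Choi matrix. Primitivity of $\mcE$ is equivalent to saying that $\lambda=1$ is a simple eigenvalue and the only one on the unit circle, which implies norm convergence $\mcE^n\to\Phi$ where $\Phi(\rho)=\Tr(\rho)\sigma$. A direct computation gives $J(\Phi)=(\Phi\otimes\mcI)(\ketbra{\Omega}{\Omega})=\sigma\otimes\Id$, which has rank $d^2=\dim L(\mcH)$ since $\sigma$ is full rank. By continuity of the Choi map we have $J(\mcE^n)\to J(\Phi)$, and by lower semi-continuity of $\mathrm{rank}$, $\mathrm{rank}(J(\mcE^n))\ge d^2$ for all sufficiently large $n$, with the trivial upper bound giving equality. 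The identity $\mathrm{rank}(J(\mcE^n))=\dim S_n(\mcE)$, which holds because the vectorizations of the $G_{\vec i}$ form a spanning set of the range of $J(\mcE^n)=\sum_{\vec i}\ketbra{G_{\vec i}}{G_{\vec i}}$, then yields $S_n(\mcE)=L(\mcH)$ for some finite $n$.

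The main technical obstacle is the invocation of quantum Perron-Frobenius in the first direction. Uniqueness of the positive fixed point of $\mcE^k$ follows because $\mcE^k$ maps any nonzero PSD to a full-rank state, so any PSD fixed point must itself be full rank; for two distinct candidates $\sigma_1\neq\sigma_2$, the maximal $t>0$ with $\sigma_1-t\sigma_2\ge 0$ yields a nonzero non-full-rank PSD fixed point, a contradiction. Absence of other peripheral eigenvalues can be obtained either via a Kadison-Schwarz/operator-concavity argument showing $|\mcE^k(X)|\le \mcE^k(|X|)$ for Hermitian $X$ (which forces $|X|\propto\sigma$, and then a convergence argument over $\mcE^{kN}$ rules out $\mu\neq 1$), or more directly by Birkhoff's strict contraction of the Hilbert projective metric, which yields exponential convergence $\mcE^{kN}\to\Phi$ and hence places all non-unit eigenvalues of $\mcE^k$ strictly inside the open unit disk.
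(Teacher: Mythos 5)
The paper does not prove this statement at all: it is imported verbatim as Proposition~3 of the cited reference on quantum Wielandt inequalities, so there is no in-paper argument to compare yours against. Your proof is correct and follows the standard route for this equivalence. For the ``if'' direction, the observation that $S_k(\mcE)=L(\mcH)$ forces $\mcE^k(\ketbra{w}{w})$ to be full rank for every $\ket{w}$ (via the functional $G\mapsto\bra{u}G\ket{w}$, which would otherwise vanish on a spanning set yet be nonzero at $G=\ketbra{u}{w}$) is exactly right, and the passage from strict positivity of $\mcE^k$ to a unique full-rank fixed point with trivial peripheral spectrum, followed by the lift to $\mcE$ through $\lambda^k=1$ and $\mcE(\sigma)=\sigma$, is sound. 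For the ``only if'' direction, the identity $\mathrm{rank}(J(\mcE^n))=\dim S_n(\mcE)$ combined with lower semicontinuity of rank along $J(\mcE^n)\to J(\Phi)=\sigma\otimes\Id$ is a clean way to conclude, and correctly avoids any need for monotonicity of $\dim S_n$ in $n$. Two standard facts are used implicitly and should be stated if this were written out in full: (i) to get $\mcE^n\to\Phi$ from the hypothesis you need that the eigenvalue $1$ is semisimple and has a one-dimensional eigenspace; semisimplicity of the peripheral spectrum follows from power-boundedness of trace-preserving positive maps, and one-dimensionality follows from the standard (but not entirely trivial) fact that the fixed-point space of such a map is spanned by its fixed states, so ``unique steady state'' really does give a simple eigenvalue; (ii) in the other direction, $\mcE(\sigma)=\sigma$ needs the one-line remark that $\mcE(\sigma)$ is again a fixed state of $\mcE^k$ and hence equals $\sigma$ by uniqueness. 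Neither is a gap, and both are routine in this setting.
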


To prove Bullet~\ref{bul1:anal}, we first consider the case when
$\eps_e=0$ for any $e\in E$, i.e., $\ueps=\uzero$. The channel is
then of the form
\begin{align*}
    \mcE_{\uzero}(\rho) 
      =\sum_{e=(i,j)\in E} \!\!\!\frac{p_e}{2} 
        [\mcD_i (\rho) + \mcD_j (\rho)] .
\end{align*}
Let $\mcD_i(\rho) = \sum_{k=1}^{m_i} D_k^{(i)} \rho
D_k^{(i)\dagger}$ be the Kraus decomposition of $\mcD_i$. The
Kraus operators of the full channel $\mcE_\uzero$ is the set (up
to proportionality constant of each operator)
\begin{align*}
    \bigcup_{i\in V}\{ D_k^{(i)}\;;\; k=1,\dots , m_i\} .
\end{align*}
It is easy to see that the steady state of $\mcE_{\uzero}$ is
unique, and given by the tensor product of the local steady states
of each local channel $\mcD_i$. Moreover, it is full rank, because
it is tensor product of full-rank operators, and $\mcE_\uzero$ has no
other eigenvalues with absolute value $1$. By
Fact~\ref{fact:ergodicity}, there is a $k\in\mathbb N$ such that
$S_k(\mcE_\uzero)=L(\mcH)$. From the functional form of the
channel in \Eq{eq:ch}, it is clear that for any $\ueps\in
[0,1)^n$, the set of Kraus operators of $\mcE_\ueps$ includes the
set $\bigcup_{i\in V}\{ D_k^{(i)}\;;\; k=1,\dots , m_i\}$,
again up to proportionality constants,
and therefore,
\begin{align*}
    S_k(\mcE_\uzero) \subseteq 
    S_k(\mcE_\ueps) .
\end{align*}
But as $S_k(\mcE_\uzero)=L(\mcH)$, we conclude that also
$S_k(\mcE_\ueps) = L(\mcH)$, and using Fact~\ref{fact:ergodicity},
uniqueness and full-rank of $\rho_{\infty}(\ueps)$ is guaranteed.

To prove the second bullet, note that the steady state satisfies
the following linear equation,
\begin{align}
\label{eq:steady-state-analytic}
  \big(\mcE_\ueps-\mcI\big) (\rho) = 0 .
\end{align}
Using standard basis vectorization, we map
$\rho_\infty(\ueps)=\sum_{ij}\rho_{ij}\ketbra{i}{j} \to \vec{v}\EqDef
\sum_{i,j}\rho_{ij}(\ueps)\ket{i}\ket{j}$, where we dropped the
explicit dependence of $\vec{v}$ on $\ueps$ in order to keep the
notation clean. Similarly, we map the channel $\mcE_\ueps$ to a
square matrix $A_\ueps$, which takes \Eq{eq:steady-state-analytic}
to the following linear matrix equation
\begin{align*}
    (A_\ueps - \Id)\vec{v}  = 0.
\end{align*}
Note that the matrix coefficients of $A_\ueps$ are affine in
$\eps_1,\ldots , \eps_n$.  
To solve this linear
equation, we use standard Gaussian elimination, which involves
subtracting from each row a multiple of another row, to reach a
row canonical form\cc{book:Kunze}. Since by Bullet~\ref{bul1:anal},
for any $\ueps \in [0,1)^n$ there is a unique steady state, we
conclude that the co-rank of $A-\Id$ must be $1$ (i.e., the rank
of its null space is $1$), and therefore, its canonical form must
look like
\begin{align*}
    (A-\Id)\mapsto B = 
    \left[\begin{array}{ccccc|ccccc}
        1 & 0 & 0 \ldots & 0 & f_1 
        &0 & 0 & \ldots & 0 \\
        0 & 1 & 0 \ldots & 0 & f_2 
        &0 & 0 &  & 0 \\
        & & \ddots & & 
        & &  & \ddots & \\
        0 & 0 & 0 \ldots & 1 & f_m 
        &0 & 0 & \ldots  & 0\\
        \hline 0 & 0 & 0 \ldots & 0 & 0 
        &1 & 0 & \ldots  & 0 \\
        0 & 0 & 0 \ldots & 0 & 0 
        &0 & 1 &  & 0 \\
        & & \vdots & & 
        & & & \ddots &  \\
        0 & 0 & 0 \ldots & 0 & 0 
        &0 & \ldots & & 1 \\
        0 & 0 & 0 \ldots & 0 & 0 
        &0 & \ldots & & 0 
    \end{array} \right]
\end{align*}
Since Gaussian elimination involves subtracting multiple of rows and
dividing each row by the first non-zero coefficient, we conclude
that $f_1,\ldots,f_m$ are in-fact polynomial quotient
\begin{align*}
    f_i(\eps_1,\ldots ,\eps_n)
    = \frac{h_i(\eps_1,\ldots,\eps_n)}{g_i(\eps_1,\ldots,\eps_n)}
\end{align*}
with $h_i,g_i$ being polynomials. As the system of equations
$(A-\Id)\vec{v} =0$ is equivalent to $B\vec{v}=0$, we
conclude that
\begin{align*}
    v_n & = v_{n-1}=\dots = v_{m+2} = 0 ,\\
    v_{m+1} & = c,\\
    v_i & = -c/f_i \quad \forall i=1,\ldots,m ,
\end{align*}
where $c$ is a constant that is fixed later by the normalization
condition $\Tr(\rho_\infty)=1$.  Note that $c$ is by itself a
polynomial quotient, thus also $v_i$ is, namely, $v_i=p_i/q_i$ for some
polynomials $p_i,q_i$ for each $i=1,\ldots,m$. Since the
coefficients of $\vec{v}$ are the entries of $\rho_\infty(\ueps)$,
and since we know that $\rho_\infty(\ueps)$ exists for any $\ueps\in
[0,1)^n$, we are guaranteed that there are no poles for any
$q_i$ in the open cube $[0,1)^n$, and therefore the entries of
$\rho_\infty(\ueps)$ are analytical in an open set containing
$[0,1)^n$\cc{ref:bochnak1997equivalence}.

\section{\texorpdfstring{$Q_1$}{Q1}-norm 
  and \texorpdfstring{$\mcT$}{T} properties}

Here we provide the proofs for properties of the $Q_1$-norm and the
super-operator $\mcT$ that were stated and used in the proof of
\Thm{thm:avA} \label{sec:Q1-proofs}
\begin{proof} [ of \Lem{lem:Q_1}] \ 
  \begin{enumerate}
    \item Recalling that in the dual Wauli
	  basis, $\tQ_0 = \Id$, the proof follows immediately.

    \item As $O\in S_k$, its dual Wauli
      expansion is $O=\sum_{|\ualpha|\le k} 
      c_\ualpha \tQ_\ualpha$. Therefore,
      \begin{align*}
        \norm{O}_{\infty} = \Big\| \sum_{|\ualpha|\le k} 
          c_\ualpha \tQ_\ualpha \Big\|_{\infty}
          \le \sum_{|\ualpha|\le k} |c_\ualpha|
            \cdot \norm{\tQ_\ualpha}_\infty 
		    \le (1+\lambda)^k\cdot \qnorm{O},
      \end{align*}
      where in the last step we used the fact that for a single
	  qubit dual-Wauli element $\tQ_\alpha$, the highest possible
	  eigenvalue (obtained for $\alpha = 3$) is $1+\lambda$, and
	  therefore the highest eigenvalue of $\tQ_\ualpha$ which is
	  supported on at most $k$ qubits is bounded by
	  $(1+\lambda)^k$. 
	  
	\item For inequality~\eqref{eq:Qinfineq-2}, 
	  we expand $O=\sum_\ualpha c_\ualpha \tQ_\ualpha$
	  and use the Hilbert-Schmidt product to express
	  $c_\ualpha = \Tr(Q_\ualpha\cdot
	  O)$. Then 
      \begin{align*}
        \qnorm{O} & = \sum_{\ualpha} |c_\alpha| = 
		  \sum_\ualpha |\Tr(Q_\ualpha \cdot O)| 
		  \le \sum_\ualpha \norm{Q_\ualpha}_1\cdot 
		    \norm{O}_\infty \le 4^\ell\cdot \norm{O}_\infty, 
      \end{align*}
	  where
	  in the first inequality we used 
	  Holder's inequality, and in the
	  last inequality we used the fact that the support of $O$ is
	  $\ell$, and therefore its expansion contains at most $4^\ell$
	  terms. In addition, for every $\ualpha$, it is easy to see
	  that $\norm{Q_\ualpha}_1=1$.
  \end{enumerate}
\end{proof}

To prove \Lem{lem:Q_1-growth} we need the
following fact: 
\begin{fact} \label{fact:norm}
  For a CPTP map $\mcE$ and an hermitian operator $O$ we have that 
  \begin{align*}
    \norm{\mcE^*(O)}_\infty \leq \norm{O}_\infty .
  \end{align*}
\end{fact}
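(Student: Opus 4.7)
The plan is to use the standard variational characterization of the operator norm for Hermitian operators. For any Hermitian $O$, one has $\norm{O}_\infty = \sup_{\rho} |\Tr(O\rho)|$, where the supremum runs over density operators $\rho$. The lower bound is saturated by the rank-one projector onto a top eigenvector of $O$, while the upper bound $|\Tr(O\rho)|\le \norm{O}_\infty \Tr(\rho)=\norm{O}_\infty$ follows from the spectral decomposition of $O$ and the positivity and unit-trace of $\rho$.

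Given this characterization, I would apply it to $\mcE^*(O)$, which is itself Hermitian since $\mcE^*$ preserves Hermiticity (its Kraus operators are the adjoints of those of $\mcE$). Using the defining identity $\Tr(\mcE^*(O)\rho) = \Tr(O\mcE(\rho))$ together with the CPTP property of $\mcE$, which guarantees that $\mcE(\rho)$ is itself a density operator, we obtain
\begin{align*}
    \norm{\mcE^*(O)}_\infty
      = \sup_\rho |\Tr(\mcE^*(O)\rho)|
      = \sup_\rho |\Tr(O\, \mcE(\rho))|
      \le \norm{O}_\infty,
\end{align*}
which is the claimed inequality.

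An equivalent route would be to argue through positivity: one has the operator inequality $-\norm{O}_\infty \Id \le O \le \norm{O}_\infty \Id$, and since $\mcE^*$ is a positive and unital map (unitality $\mcE^*(\Id)=\Id$ is just the trace-preservation of $\mcE$), applying $\mcE^*$ preserves this order, giving $-\norm{O}_\infty \Id \le \mcE^*(O) \le \norm{O}_\infty \Id$ and hence the same conclusion. The statement is routine and I do not anticipate any real obstacle; the only small subtlety is checking that $\mcE^*(O)$ remains Hermitian, which is immediate from the Kraus representation.
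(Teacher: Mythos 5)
Your argument is correct and is essentially identical to the paper's own proof, which also uses the variational characterization $\norm{A}_\infty=\max_{\rho}|\Tr(A\rho)|$ for Hermitian $A$ together with the adjoint identity and the fact that $\mcE$ maps states to states. The positivity/unitality alternative you sketch is a valid variant but not needed.
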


\begin{proof} [ of fact \ref{fact:norm}]
  \begin{align*}
    \norm{\mcE^*(O)}_\infty = \max_{\rho\in D(\mcH)} |\Tr(\mcE^*(O)\rho)|
    = \max_{\rho\in D(\mcH)} |\Tr(O\mcE(\rho)) |
    \leq  \max_{\rho'\in D(\mcH)} |\Tr(O \rho'))| = \norm{O}_\infty ,
  \end{align*}
  where $D(\mcH)$
  denotes the set of quantum states, and in the first and last
  equalities we used the fact that for a hermitian operator
  $A$, $\norm{A}_\infty=\max_{\rho\in D(\mcH)} |\Tr(A\cdot\rho)|$.
\end{proof}

\begin{proof} [ of lemma \ref{lem:Q_1-growth}]
  Expand $O=\sum_\ualpha c_\ualpha \tQ_\ualpha$
  and decompose each string $\ualpha$ to $\ualpha_a,
  \ualpha_b$, where $\ualpha_a$ acts on spins in the support of $\mcE$
  and $\ualpha_b$ on the rest of the system. Then for every
  $\ualpha$, decompose $\tQ_\ualpha =
  \tQ_{\ualpha_a}\otimes \tQ_{\ualpha_b}$, and upper-bound
  \begin{align*}
    \norm{\mcE^*(O)}_{Q_1} &\le \sum_\ualpha
        |c_\ualpha| \cdot \norm{\mcE^*(\tQ_{\ualpha_a})\otimes 
		  \tQ_{\ualpha_b}}_{Q_1}
    \underbrace{=}_{\text{\eqref{eq:normOtimesQ}}} \sum_\ualpha
        |c_\ualpha| \cdot \norm{\mcE^*(\tQ_{\ualpha_a})}_{Q_1}
    \underbrace{\le}_{\text{\eqref{eq:Qinfineq-2}}} \sum_\ualpha
        |c_\ualpha|\cdot 4^k
		  \cdot \norm{\mcE^*(\tQ_{\ualpha_a})}_{\infty}  \\
        &\underbrace{\le}_{\text{Fact~\ref{fact:norm}}} 
	\sum_\ualpha |c_\ualpha|\cdot 4^k\cdot \norm{\tQ_{\ualpha_a}}_{\infty}
    \underbrace{\le}_{\text{\eqref{eq:Qinfineq-1}}} \sum_\ualpha
        |c_\ualpha|\cdot 4^k\cdot (1+\lambda)^k\cdot
		\underbrace{\norm{\tQ_{\ualpha_a}}_{Q_1}}_{=1} 
        =4^k(1+\lambda)^k\cdot \norm{O}_{Q_1} . 
  \end{align*}
  Finally, using the fact that $\lambda\le 1$, we conclude
  that $4^k(1+\lambda)^k \le 2^{3k}$.
\end{proof}

\begin{proof}[ of Claim~\ref{clm:tau}]
\label{sec:proof-Lemma-tau} 
  We start by deriving an explicit formula for the application of
  $\mcT$ on a dual Wauli operator $\tQ_\ualpha$. Applying $\mcE_1^*$
  on \Eq{eq:invadjK} we get: 
  \begin{align}
  \label{eq:TdualW0}
   \mcT (\tQ_\uzero) & = \mcE_1^*(\tQ_\uzero) = 0, \qquad \\
  \label{eq:TdualW}
   \mcT (\tQ_\ualpha) & = \frac{1}{q_\ualpha}\mcE_1^*(\tQ_\ualpha) =
    \frac{1}{q_\ualpha}\sum_{e\in E} p_e \mcF^*_e(\tQ_\ualpha)
      - \frac{1-q_\ualpha}{q_\ualpha}\tQ_\ualpha, 
	    \qquad (\ualpha \neq \uzero).
  \end{align}
  where in the first equation we have used the fact that dual quantum
  channels are unital, therefore $\mcE_1^*$, being a difference of
  such channels, maps the identity $\tQ_\uzero$ to $0$. Similarly,
  $\mcF_e^*$ is unital and since it is $2$-local (acting on an
  edge), we must have $|\supp(\mcF^*_e(\tQ_\ualpha))|\le
  1+|\supp(\tQ_\ualpha)|$. As the other terms of $\mcT$ cannot
  increase the support, we get that for any $O\in \mcS_k$, by
  expanding it in the dual-Waulis, we must have $\mcT(O) \in
  \mcS_{k+1}$ by the linearity of the channel. Moreover, given that
  $O$ is supported in a subset $A$, or alternatively, any dual-Wauli
  $\tQ_\alpha$ in its expansion is supported in $A$, under the same
  reasoning we would get that $\supp(\tQ_\alpha)$ is inside $A\cup
  e$ (provided that $e$ intersects $A$). This concludes
  Bullet~\ref{clm:tau:1}. 
  
  Next, let $E_0$ denote the subset of edges $e=(i,j)$ for which
  $\alpha_i=\alpha_j=0$, i.e., the edges that are fully
  outside the support of $\ualpha$, and let $E_1\EqDef E\setminus
  E_0$. For $e\in E_0$, we have $\mcF_e^*(\tQ_\ualpha) =
  \tQ_\ualpha$ since $\tQ_\ualpha$ reduces to the identity on qubits
  $i,j$. Thus, breaking the sum to $E_0$ and $E_1$ gives 
  \begin{align*}
    \sum_{e\in E}p_e \mcF^*_e(\tQ_\ualpha) 
    = \big(1- p_{E_1}\big)\tQ_\ualpha
     + \sum_{e\in E_1}p_e\mcF^*_e(\tQ_\ualpha),
  \end{align*}
  where we use the notation $p_{E_1}\EqDef \sum_{e\in E_1}p_e$ and
  the fact that $\{p_e\}$ form a probability distribution.
  Plugging this into \Eq{eq:TdualW} for $\ualpha \neq \uzero$ get
  \begin{align} \label{eq:tau-tQ_a}
    \mcT(\tQ_\ualpha) &= \frac{1}{q_\ualpha}
      \sum_{e\in E_1} p_e\mcF^*_e(\tQ_\ualpha) 
     + \left(1-\frac{p_{E_1}}{q_\ualpha}\right)
	     \tQ_\ualpha .
  \end{align}
  With this expression in hand, we proceed to prove
  Bullet~\ref{clm:tau:2}. Let $O$ be a general observable expanded
  in the dual-Waulis $O=\sum_\ualpha c_\ualpha \tQ_\ualpha$. Before
  bounding $\qnorm{\mcT(O)}$, we note following:
  \begin{enumerate}[i]
    \item For any dual Wauli $\tQ_\ualpha$, clearly 
      $\qnorm{\tQ_\ualpha}=1$
      
    \item Therefore, by \Lem{lem:Q_1-growth},
      $\qnorm{\mcF^*_e(\tQ_\ualpha)} \le 2^6 \qnorm{\tQ_\ualpha} = 64$
      as $\mcF^*_e$, is $2$-local.
      
    \item Given the relation between $q_i$ and $p_e$ from 
      \Eq{eq:simpler-channel}, it follows that $p_{E_1}\le 2
      q_{\ualpha}$, since:
      \begin{align*}
          p_{E_1} = \sum_{e\in E_1} p_e 
            \le \sum_{i\in \supp(\ualpha)}\sum_{e\ni i} p_e
          =2\sum_{i\in \supp(\ualpha)}q_i = 2 q_{\ualpha}
      \end{align*}
  \end{enumerate}
  Using these, together with \eqref{eq:TdualW0} and
  \eqref{eq:tau-tQ_a}, we obtain the desired bound: 
  \begin{align*}
    \norm{\mcT(O)}_{Q_1} &\le \sum_{\ualpha} |c_\ualpha|\cdot 
	  \norm{\mcT(\tQ_\ualpha)}_{Q_1} \\
    &\le \sum_{\ualpha \neq \uzero} |c_\ualpha|\cdot 
      \left(\frac{1}{q_\ualpha}\sum_{e\in E_1}
	      p_e\norm{\mcF^*_e(\tQ_\ualpha)}_{Q_1} +
           \left|1-\frac{p_{E_1}}{q_\ualpha}\right|\cdot
		     \norm{\tQ_\ualpha}_{Q_1} \right)\\
    &\le \sum_{\ualpha \neq \uzero} |c_\ualpha| 
    \left( \frac{p_{E_1}}{q_\ualpha}\cdot 64 +
      \left|1-\frac{p_{E_1}}{q_\ualpha}\right| \right) \\
     & \le 130\cdot \sum_{\ualpha} |c_\ualpha| = 130\cdot \qnorm{O} .
  \end{align*}
  
\end{proof}

\bibliographystyle{ieeetr}
\bibliography{KGB}

\end{document}